\newtheorem{theorem}{Theorem}
\newtheorem{definition}{Definition}
\newtheorem{proposition}{Proposition}
\newtheorem{corollary}{Corollary}
\newtheorem{remark}{Remark}
\newcommand{\myfnsymbol}[1]{%
  \expandafter\@myfnsymbol\csname c@#1\endcsname
}
\newcommand{\@myfnsymbol}[1]{%
  \ifcase #1
    % 0
  \or 1% 1
  \or 2% 2
  \or 3% 3
  \or 4% 4
  \or \TextOrMath{\textasteriskcentered}{*}% 5
  \or \TextOrMath{\textdagger}{\dagger}% 6
  \fi
}
\newcommand{\affiliationA}{\@myfnsymbol{1}}
\newcommand{\affiliationB}{\@myfnsymbol{2}}
\newcommand{\affiliationC}{\@myfnsymbol{3}}
\newcommand{\affiliationD}{\@myfnsymbol{4}}
\newcommand{\correspondingA}{\@myfnsymbol{5}}
\begin{document}

\title{ARTreeFormer: A Faster Attention-based Autoregressive Model for Phylogenetic Inference}

\author{
Tianyu Xie\textsuperscript{\affiliationA},
Yicong Mao\textsuperscript{\affiliationB},
Cheng Zhang\textsuperscript{\affiliationC,\correspondingA}
}

\date{
}

% Thanks notes for title uses \myfnsymbol
\renewcommand{\thefootnote}{\myfnsymbol{footnote}}
\maketitle
% Layout the \thanks notes in the order you want
\footnotetext[1]{School of Mathematical Sciences, Peking University,
   Beijing, 100871, China. Email: tianyuxie@pku.edu.cn}%
\footnotetext[2]{School of Public Health, Peking University,
   Beijing, 100191, China. Email: ycmao@hsc.pku.edu.cn}%
\footnotetext[3]{School of Mathematical Sciences and Center for Statistical Science, Peking University, Beijing, 100871, China. Email: chengzhang@math.pku.edu.cn}
\footnotetext[5]{Corresponding author}%

\setcounter{footnote}{0}
\renewcommand{\thefootnote}{\fnsymbol{footnote}}

\begin{abstract}
Probabilistic modeling over the combinatorially large space of tree topologies remains a central challenge in phylogenetic inference.
Previous approaches often necessitate pre-sampled tree topologies, limiting their modeling capability to a subset of the entire tree space.
A recent advancement is ARTree, a deep autoregressive model that offers unrestricted distributions for tree topologies.
However, its reliance on repetitive tree traversals and inefficient local message passing for computing topological node representations may hamper the scalability to large datasets.
This paper proposes ARTreeFormer, a novel approach that harnesses fixed-point iteration and attention mechanisms to accelerate ARTree.
By introducing a fixed-point iteration algorithm for computing the topological node embeddings, ARTreeFormer allows fast vectorized computation, especially on CUDA devices. 
This, together with an attention-based global message passing scheme, significantly improves the computation speed of ARTree while maintaining great approximation performance.
We demonstrate the effectiveness and efficiency of our method on a benchmark of challenging real data phylogenetic inference problems.
\end{abstract}

\section{Introduction}

Unraveling the evolutionary relationships among species stands as a core problem in the field of computational biology.
This complex task, called \textit{phylogenetic inference}, is abstracted as the statistical inference on the hypothesis of shared history, i.e., \emph{phylogenetic trees}, based on collected molecular sequences (e.g., DNA, RNA) of the species of interest and a model of evolution. 
Phylogenetic inference finds its diverse applications ranging from genomic epidemiology \citep{Dudas2017-sb,Du_Plessis2021-tq,Attwood2022PhylogeneticAP} to the study of conservation genetics \citep{DeSalle2004-mm}. 
Classical approaches for phylogenetic inference includes maximum likelihood \citep{Felsenstein81}, maximum parsimony \citep{fitch1971parsimony}, and Bayesian approaches \citep{yang1997bayesian, Mau99, Larget1999MarkovCM}, etc.
Nevertheless, phylogenetic inference remains a hard challenge partially due to the combinatorially explosive size ($(2N-5)!!$ for unrooted bifurcating trees with $N$ species) of the phylogenetic tree topology space \citep{Whidden2014QuantifyingME, Dinh2017-oj}, which makes many common principles in phylogenetics, e.g., maximum likelihood and maximum parsimony, to be NP-hard problems \citep{Chor2005Maximumlikelihood, day1987complexity}.

Recently, the prosperous development of machine learning provides an effective and innovative approach to phylogenetic inference, and many efforts have been made for expressive probabilistic modeling of the tree topologies \citep{Hhna2012-pm, Larget2013-et, Zhang2018SBN, xie2023artree}.
A notable example among them is ARTree \citep{xie2023artree}, which provides a rich family of tree topology distributions and achieves state-of-the-art performance on benchmark data sets.
Given a specific order on the leaf nodes (also called the taxa order), ARTree generates a tree topology by sequentially adding a new leaf node to an edge of the current subtree topology at a time, according to an edge decision distribution modeled by graph neural networks (GNNs), until all the leaf nodes have been added.
Compared with previous methods such as conditional clade distribution (CCD) \citep{Larget2013-et} and subsplit Bayesian networks (SBNs) \citep{Zhang2018SBN}, an important advantage of ARTree is that it enjoys unconfined support over the entire tree topology space.
However, to compute the edge decision distribution in each leaf node addition step, ARTree requires sequential computations of topological node embeddings via tree traversals, which is hard to vectorize, making it prohibitive for phylogenetic inference for large numbers of species, as observed in \citet{xie2023artree}.
Besides, the message passing in ARTree only updates node features from their neighborhood, ignoring the important global information and would require multiple message passing rounds to obtain adequate information about trees.

To address the computational inefficiencies of ARTree, we propose ARTreeFormer, which enables faster ancestral sampling and probability evaluation by leveraging scalable system-solving algorithms and transformer architectures \citep{vaswani2017attention}.
More specifically, we replace the time-consuming tree traversal-based algorithm with a fixed-point iteration method for computing the topological node embeddings.
We also prove that, under a specific stopping criterion, the number of iterations required for convergence is independent of both the tree topology and the number of leaves.
To further reduce the computational cost, we introduce an attention-based global message passing scheme that captures tree-wide information in a single forward pass. Unlike ARTree, all components of ARTreeFormer can be fully vectorized across multiple tree topologies and nodes, allowing efficient batch-wise generation and evaluation. 
This makes ARTreeFormer particularly well-suited for large-batch training on CUDA\footnote{Compute Unified Device Architecture (CUDA) is a parallel computing platform and programming model developed by NVIDIA. It enables developers to use NVIDIA Graphics Processing Units (GPUs) for general-purpose processing (GPGPU), significantly accelerating computationally intensive tasks by leveraging the GPU's massive parallel processing capabilities.}-enabled devices, which is a standard setup in modern deep learning.
Our experiments demonstrate that ARTreeFormer achieves comparable or better performance than ARTree, while delivering approximately $10 \times$ faster generation and $6\times$ faster training on a benchmark suite covering maximum parsimony reconstruction, tree topology density estimation, and variational Bayesian phylogenetic inference tasks.

\section{Materials and methods}\label{sec:materials-and-methods}
In this section, we first introduce the necessary background, including the phylogenetic posterior, variational Bayesian phylogenetic inference, and the ARTree model for tree topology generation.
We then analyze the computational limitations of ARTree, which motivate the development of ARTreeFormer.
Finally, we present the two key components of ARTreeFormer: a fixed-point iteration method for computing topological node embeddings and an attention-based global message passing mechanism.

\subsection{Phylogenetic posterior}
The common structure for describing evolutionary history is a phylogenetic tree, which consists of a bifurcating tree topology $\tau$ and the associated non-negative branch lengths $\bm{q}$.
The tree topology $\tau$, which contains leaf nodes for the observed species and internal nodes for the unobserved ancestor species, represents the evolutionary relationship among these species.
A tree topology can be either rooted or unrooted.
In this paper, we only discuss unrooted tree topologies, but the proposed method can be easily adapted to rooted tree topologies.
The branch lengths $\bm{q}$ quantify the evolutionary intensity along the edges on $\tau$.
An edge is called a pendant edge if it connects one leaf node to an internal node.

Each leaf node on $\tau$ corresponds to a species with an observed biological sequence (e.g., DNA, RNA, protein).
Let $\bm{Y}=\{Y_1,\ldots,Y_M\}\in \Omega^{N\times M}$ be the observed sequences (with characters in $\Omega$) of $M$ sites over $N$ species.
A continuous-time Markov chain is commonly assumed to model the transition probabilities of the characters along the edges of a phylogenetic tree \citep{felsenstein2004inferring}. 
Under the assumption that different sites evolve independently and identically conditioned on the phylogenetic tree, the likelihood of observing sequences $\bm{Y}$ given a phylogenetic tree $(\tau,\bm{q})$ takes the form
\begin{equation}\label{eq:likelihood}
p(\bm{Y}|\tau,\bm{q}) =\prod_{i=1}^M \sum_{a^i}\eta(a^i_r)\prod_{(u,v)\in E}P_{a^i_u a^i_v}(q_{uv}),
\end{equation}
where $a^i$ ranges over all extensions of $Y_i$ to the internal nodes with $a^i_u$ being the character assignment of node $u$ ($r$ represents the root node), $E$ is the set of edges of $\tau$,
$q_{uv}$ is the branch length of the edge $(u,v)\in E$, $P_{jk}(q)$ is the transition probability from character $j$ to $k$ through an edge of length $q$, and $\eta$ is the stationary distribution of the Markov chain. 
Assuming a prior distribution $p(\tau,\bm{q})$ on phylogenetic trees, Bayesian phylogenetic inference then amounts to properly estimating the posterior distribution
\begin{equation}\label{eq:posterior}
p(\tau, \bm{q}|\bm{Y}) = \frac{p(\bm{Y}|\tau,\bm{q})p(\tau,\bm{q})}{p(\bm{Y})}\propto p(\bm{Y}|\tau,\bm{q})p(\tau,\bm{q}).
\end{equation}

\subsection{Variational Bayesian phylogenetic inference}\label{sec:vbpi}
By positing a phylogenetic variational family $Q_{\bm{\phi}, \bm{\psi}}(\tau,\bm{q})=Q_{\bm{\phi}}(\tau)Q_{\bm{\psi}}(\bm{q}|\tau)$ as the product of a tree topology model $Q_{\bm{\phi}}(\tau)$ and a conditional branch length model $Q_{\bm{\psi}}(\bm{q}|\tau)$, variational Bayesian phylogenetic inference (VBPI) converts the inference problem (\ref{eq:posterior}) into an optimization problem.
More specifically, VBPI seeks the best variational approximation by maximizing the following multi-sample lower bound
\begin{equation}\label{eq:lower-bound}
L^{K}(\bm{\phi},\bm{\psi}) = \mathbb{E}_{Q_{\bm{\phi},\bm{\psi}}(\tau^{1:K},\bm{q}^{1:K})} \log \left(\frac{1}{K}\sum_{i=1}^K\frac{p(\bm{Y}|\tau^i,\bm{q}^i) p(\tau^i, \bm{q}^i)}{Q_{\bm{\phi}}(\tau^i)Q_{\bm{\psi}}(\bm{q}^i|\tau^i)}\right),
\end{equation}
where $Q_{\bm{\phi},\bm{\psi}}(\tau^{1:K},\bm{q}^{1:K})=\prod_{i=1}^K Q_{\bm{\phi},\bm{\psi}}(\tau^{i},\bm{q}^{i})$.
In addition to the likelihood $p(\bm{Y},\tau,\bm{q})$ in the numerator of Eq~(\ref{eq:lower-bound}), one may also consider the parsimony score defined as the minimum number of character-state changes among all possible sequence assignments for internal nodes, i.e.,
\begin{equation}\label{eq:parsimony}
\mathcal{P}(\tau;\bm{Y}) = \sum_{i=1}^M \min_{a^i} \sum_{(u,v)\in E} \mathbb{I}(a_u^i\neq a_v^i),
\end{equation}
where the notations are the same as in Eq~(\ref{eq:likelihood}) \citep{Zhou2023PhyloGFN}.
The parsimony score $\mathcal{P}(\tau;\bm{Y})$ can be efficiently evaluated by the Fitch algorithm~\citep{fitch1971parsimony} in linear time.

The tree topology model $Q_{\bm{\phi}}(\tau)$ can take subsplit Bayesian networks (SBNs)~\citep{Zhang2018SBN} which rely on subsplit support estimation for parametrization, or ARTree \citep{xie2023artree} which is an autoregressive model using graph neural networks (GNNs) that provides distributions over the entire tree topology space.
A diagonal lognormal distribution is commonly used for the branch length model $Q_{\bm{\psi}}(\bm{q}|\tau)$ whose locations and scales are parameterized with heuristic features \citep{Zhang2019VBPI} or learnable topological features \citet{Zhang2023VBPIGNN}. 
More advanced models for branch lengths like normalizing flows \citep{Zhang2020VBPINF} or semi-implicit distributions \citep{xie2024vbpisibranch} are also applicable.
More details about VBPI can be found in Appendix \ref{app:vbpi}.

\subsection{ARTree for tree topology generation}
As an autoregressive model for tree topology generation, ARTree \citep{xie2023artree} decomposes a tree topology into a sequence of leaf node addition decisions and models the involved conditional probabilities with GNNs.
The corresponding tree topology generating process can be described as follows.
Let $\mathcal{X}=\{x_1,\ldots,x_N\}$ be the set of leaf nodes with a pre-defined order.
The generating procedure starts with a simple tree topology $\tau_3=(V_3,E_3)$ that has the first three nodes $\{x_1,x_2,x_3\}$ as the leaf nodes (which is unique), and keeps adding new leaf nodes according to the following rule.
Given an intermediate tree topology $\tau_n=(V_n,E_n)$ that has the first $n<N$ elements in $\mathcal{X}$ as the leaf nodes, i.e., an \emph{ordinal tree topology} of rank $n$ as defined in \citep{xie2023artree}, a probability vector $q_n \in \mathbb{R}^{|E_n|}$ over the edge set $E_n$ is first computed via GNNs.
Then, an edge $e_n\in E_n$ is sampled according to $q_n$ and the next leaf node $x_{n+1}$ is attached to it to form an ordinal tree topology $\tau_{n+1}$.
This procedure will continue until all the $N$ leaf nodes are added.
Although a pre-defined leaf node order is required, \citet{xie2023artree} shows that the performance of ARTree exhibits negligible dependency on this leaf node order.
Figure \ref{fig:artreeplot} is an illustration of ARTree.
See more details on ARTree in Appendix \ref{app:artree}.

\begin{figure}[h]
    \centering
    \includegraphics[width=\linewidth]{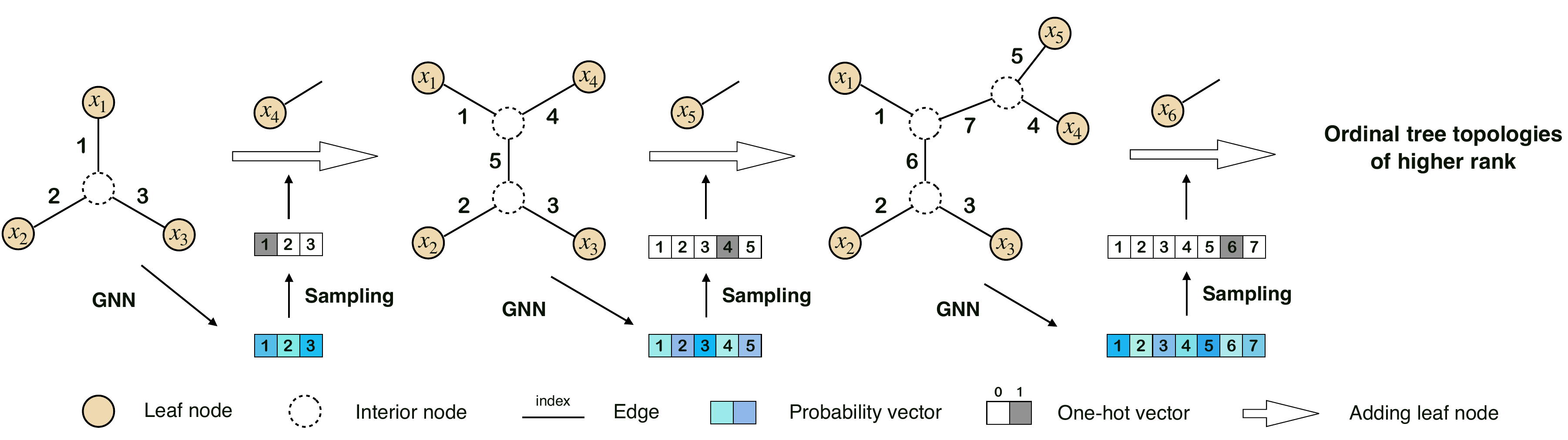}
    \caption{\textbf{An illustration of ARTree starting from the star-shaped tree topology with 3 leaf nodes.} This figure is from \citet{xie2023artree}.}
    \label{fig:artreeplot}
\end{figure}

Although ARTree enjoys unconfined support over the entire tree topology space and provides a more flexible family of variational distributions, it suffers from expensive computation costs (see Appendix E in \citet{xie2023artree}) which makes it prohibitive for phylogenetic inference when the number of species is large.
In the next two subsections, we discuss the computational cost of ARTree and then describe how it can be accelerated using fixed-point iteration and attention-based techniques. 

\subsection{Computational cost of ARTree}\label{sec:time-complexity}
In the $n$-th step of leaf node addition, ARTree includes the node embedding module and message passing module for computing the edge decision distribution, as detailed below.
Throughout this section, we use ``node embeddings'' (with dimension $N$) for the node information before message passing and ``node features'' (with dimension $d$) for those in and after message passing.

\paragraph{Node embedding module} 
The topological node embeddings $\{f_n(u)\in\mathbb{R}^N|u\in V_n\}$ of an ordinal tree topology $\tau_n=(V_n,E_n)$ in \citet{xie2023artree} are obtained by first assigning one-hot encodings to the leaf nodes and then minimizing the \textit{Dirichlet energy}
\begin{equation}\label{eq:global-dirichlet-energy}
\ell(f_n,\tau_n) := \sum_{(u,v)\in E_n}\left\|f_n(u)-f_n(v)\right\|^2,
\end{equation}
which is typically done by the two-pass algorithm \citep{Zhang2023VBPIGNN} (Algorithm \ref{alg:embedding} in Appendix \ref{app:artree}). 
This algorithm requires a traversal over a tree topology, which is hard to be efficiently vectorized across different nodes and different trees due to its sequential nature and the dependency on the specific tree topology shapes.
The complexity of computing the topological node embeddings is $O(Nn)$.
Finally, a multi-layer perceptron (MLP) is applied to all the node embeddings to obtain the node features with dimension $d$ enrolled in the computation of the following modules.

\paragraph{Message passing module}

At the beginning of message passing, assume the the initial node features are $\{f_n^0(u)\in\mathbb{R}^d|u\in V_n\}$, which are transformed from $\{f_n(u)\in \mathbb{R}^N|u\in V_n\}$ using MLPs with complexity $O(Nnd)$.
In the $l$-th round, these node features are updated by aggregating the information from their neighborhoods through
\begin{subequations}\label{eq:message-passing}
\arraycolsep=1.8pt
\def\arraystretch{1.5}
\begin{align}
m^l_n(u,v) &=F_{\textrm{message}}^l(f^l_n(u), f^l_n(v)),\label{mp-a}\\
f^{l+1}_n(v) &= F_{\textrm{updating}}^l\left(\{m^l_n(u,v);u\in \mathcal{N}(v)\}\right),\label{mp-b}
\end{align}
\end{subequations}
where the $l$-th message function $F_{\textrm{message}}^l$ and updating function $F_{\textrm{updating}}^l$ consist of MLPs.
After $L$ rounds of message passing, a recurrent neural network implemented by a gated recurrent unit (GRU) \citep{Gilmer2017NeuralMP} is then applied to help ARTree grasp the information from all previously generated tree topologies, i.e.,
\begin{equation}\label{eq:gru}
h_n(v) = \mathrm{GRU}(h_{n-1}(v), f^L_n(v)),
\end{equation}
where $h_n(v)$ is the hidden state of $v$. 
Eq~(\ref{eq:message-passing}) and (\ref{eq:gru}) are applied to the features of all the nodes on $\tau_n$ which require $O(Lnd^2)$ operations and is computationally inefficient especially when the number of leaf nodes is large.
Moreover, Eq~(\ref{eq:message-passing}) only updates the features of a node from its neighborhood, ignoring the global information of the full tree topology, and thus is called \textbf{local message passing} by us. We summarize the computational complexity of ARTree in Proposition \ref{prop:artree}.
\begin{proposition}[Time complexity of ARTree]
\label{prop:artree}
For generating $B$ tree topologies with $N$ leaf nodes, the time complexity of ARTree is $O(BN^3+BLN^2d^2+BN^3d)$. In the ideal case of perfect vectorization, the complexity of ARTree is $O(BN^2+LN)$.
\end{proposition}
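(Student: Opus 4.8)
The plan is to read off the cost of each module of ARTree at a single leaf-addition step, sum these costs over all $N-3$ steps of the autoregressive generation, and then multiply by the batch size $B$; the second claim follows by re-examining which of the resulting nested loops can be executed in parallel.

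First I would fix the $n$-th leaf-addition step, in which ARTree operates on the ordinal tree topology $\tau_n=(V_n,E_n)$ with $|V_n|=O(n)$ and $|E_n|=O(n)$, and account for the three contributions. (i) The topological node embeddings minimizing the Dirichlet energy (\ref{eq:global-dirichlet-energy}) are produced by the two-pass algorithm, which performs $O(n)$ updates on $N$-dimensional vectors, costing $O(Nn)$; the subsequent MLP mapping each of the $O(n)$ embeddings from dimension $N$ to dimension $d$ costs $O(Nnd)$. (ii) Each of the $L$ message-passing rounds in (\ref{eq:message-passing}) applies MLPs with $O(d^2)$ cost to the $O(n)$ edges and nodes, and the GRU update (\ref{eq:gru}) likewise costs $O(d^2)$ per node, so this module costs $O(Lnd^2)$. (iii) The final MLP producing the edge logits over $E_n$ costs $O(nd)$, which is dominated by (ii). Summing over $n=3,\ldots,N-1$ and using $\sum_{n} n = O(N^2)$ gives, per generated tree, $O(N^3)$ for the embeddings, $O(N^3 d)$ for the embedding MLP, and $O(LN^2 d^2)$ for message passing; multiplying by $B$ independent trees yields $O(BN^3 + BN^3 d + BLN^2 d^2)$, which is the first claim.

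For the ideal vectorized bound I would argue which of the nested loops are genuinely sequential. The $N-3$ leaf-addition steps cannot be parallelized because $\tau_{n+1}$ and the GRU hidden states depend on the output at step $n$, so a factor $N$ survives. Within a step, the message-passing rounds are also sequential (round $l+1$ needs round $l$), contributing a factor $L$, but every operation inside a round -- the MLPs in (\ref{eq:message-passing}) and the GRU in (\ref{eq:gru}) -- is the same bounded-arity computation applied to all nodes of all $B$ trees (bifurcating topologies have bounded degree), so it vectorizes fully and leaves only $O(LN)$. The node-embedding module is the exception: although the elementwise vector arithmetic vectorizes over nodes and over the $N$ coordinates, the two-pass traversal proceeds in a data-dependent order whose depth is $\Theta(n)$ in the worst case (e.g.\ a caterpillar tree) and differs across trees, which obstructs vectorization across the batch; charging $O(n)$ sequential passes per step and summing over steps leaves $O(N^2)$ per tree, hence $O(BN^2)$. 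Combining the two surviving terms gives $O(BN^2 + LN)$.

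The main obstacle is the second half: the first claim is a routine summation, but the ``perfect vectorization'' statement requires a careful and necessarily somewhat informal model of which computations parallelize. In particular one must justify the asymmetry between the message-passing term (where the $B$ factor disappears) and the node-embedding term (where it persists), which rests on the fact that the two-pass algorithm's traversal schedule is tied to each individual tree shape whereas the message-passing and GRU updates are structurally uniform. Making ``ideal vectorization'' precise -- for instance as a PRAM-style depth bound, or as the cost of a padded batched implementation -- is where the argument needs the most care.
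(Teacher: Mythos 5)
Your proposal is correct and follows essentially the same route as the paper, which justifies Proposition~\ref{prop:artree} by the per-module costs stated in Section~\ref{sec:time-complexity} and Appendix~\ref{app:artree} ($O(Nn)$ for the two-pass embedding, $O(Nnd)$ for the embedding-to-feature MLPs, $O(Lnd^2)$ for message passing and the GRU), summed over the $N-3$ autoregressive steps and multiplied by $B$. Your treatment of the vectorized bound $O(BN^2+LN)$ likewise matches the paper's reasoning — the two-pass traversal's tree-shape-dependent sequential schedule blocks batching while the structurally uniform message-passing rounds vectorize across nodes and trees — and is in fact spelled out more explicitly than in the paper, which asserts this part without a formal parallel-cost model.
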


Fig \ref{fig:artree-cpu-time} (left) demonstrates the run time of ARTree as the number of leaf nodes $N$ varies.
As $N$ increases, the total run time of ARTree grows rapidly and the node embedding module dominates the total time ($\approx 95\%$ on CUDA and $\approx 60\%$ on CPU),
which makes ARTree prohibitive when the number of leaf nodes is large.
The reason behind this is that compared to other modules, the node embedding module can not be easily vectorized w.r.t. different tree topologies and different nodes, resulting in great computational inefficiency.
It is worth noting that the computation time of the node embedding module on CUDA is even larger than that on CPU, which can be attributed to the inefficiency of CUDA for handling small tensors.

\begin{figure}[h]
\includegraphics[width=\linewidth]{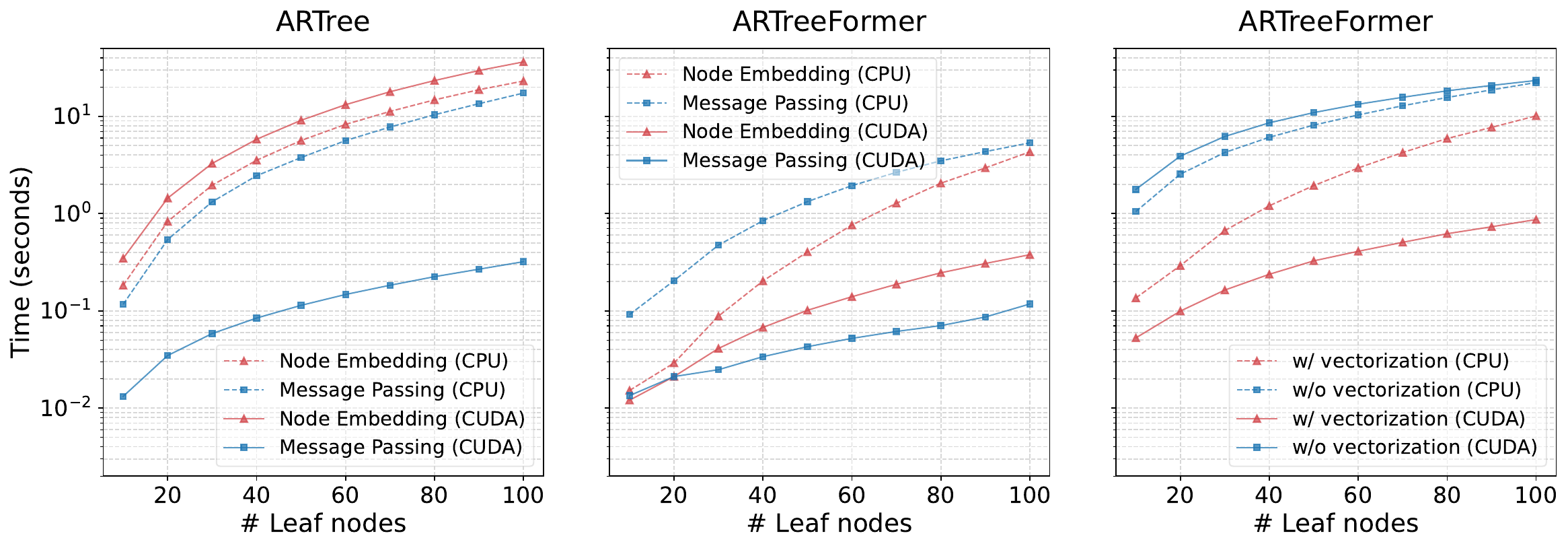}
\caption{\textbf{Time comparison between different models and devices}. Left \& Middle: Runtime of the node embedding module and message passing module for generating 128 tree topologies in a single batch using ARTree and ARTreeFormer. Right: The runtime of ARTreeFormer for generating 128 tree topologies with or without vectorization across batched tree topologies.
CPU means running on a cluster of 16 2.4GHz CPUs, and CUDA means running on a single NVIDIA A100 GPU.
All these results are averaged over 10 independent trials.
}
\label{fig:artree-cpu-time}
\end{figure}

\subsection{Accelerated computation of edge decision distributions}\label{sec:artreeformer}
In this subsection, we propose ARTreeFormer, which introduces a fast fixed-point iteration algorithm for topological node embeddings and an attention-based global message passing scheme to accelerate the training and sampling in ARTree.
In what follows, we present our approach for modeling the edge decision distribution at the $n$-th step.

\paragraph{Fixed-point iteration for topological node embedding}
Instead of solving the minimization problem of $\ell(f_n,\tau_n)$ in Eq~(\ref{eq:global-dirichlet-energy}) with the time-consuming two-pass algorithm, we reformulate it as a fixed-point iteration algorithm.
For a tree topology $\tau_n=(V_n,E_n)$, denote the set of leaf nodes by $\mathcal{X}_n$, the set of internal branches by $V_n^o$, and the set of internal nodes by $E_n^o=\{(u,v)| u,v\in V_n^{o}\}$.
Note that the global minimum of $\ell(f_n,\tau_n)$ satisfies
\begin{equation}
\label{eq:global-minimum}
\left\{
\begin{array}{lc}
f_n(u)= \frac{1}{3}\sum_{v\in \mathcal{N}(u)}f_n(v) , & u \in V_n^o;\\  f_n(x_i) = \delta_{i},& x_i \in \mathcal{X}_n; 
\end{array}
\right.
\end{equation}
where $\mathcal{N}(u)$ is the set of neighbors of $u$ and $\delta_{i}$ is a one-hot vector of length $n$ with the $1$ at the $i$-th position.
Let $\bar{\mathcal{F}}_n=\{f_n(u)\in\mathbb{R}^n|u\in V_n\}\in \mathbb{R}^{(2n-2)\times n}$ and $\mathcal{F}_n=\{f_n(u)\in\mathbb{R}^n|u\in V_n^o\}\in \mathbb{R}^{(n-2)\times n}$, then $\bar{\mathcal{F}}_n=(I_n, \mathcal{F}'_n)'$.
Consider a matrix $\Bar{A}_n$ satisfying
\begin{equation}\label{eq:matrix-decomp}
\bar{A}_n = \left(
\begin{array}{cc}
I_n& 0_{n-2} \\
C_n/3& A_n/3 
\end{array}
\right),
\end{equation}
where $C_n({i,j})=\mathbb{I}_{(u_i, x_j)\in E_n}$ and $A_n({i,j})=\mathbb{I}_{(u_i,u_j)\in E_n}$ ($u_i$ denotes the $i$-th node, leaf nodes are indexed as the first $n$ nodes).
Note that $A_n$ is exactly the adjacency matrix of $(V_n^o, E_n^o):=\tau^o_n$.
We call $A_n$ the \emph{interior adjacency matrix} and $C_n$ the \emph{leaf-interior cross adjacency matrix} of $\tau_n$.
The system (\ref{eq:global-minimum}) is then equivalent to $\bar{A}_n\bar{F}_n=\bar{F}_n$, i.e.,
\begin{equation}\label{eq:reduced-equation}
\mathcal{F}_n = \frac{A_n}{3}\mathcal{F}_n + \frac{C_n}{3}.
\end{equation}
This inspires the following fixed-point iteration algorithm:
\begin{equation}\label{eq:fixed-point-iteration}
\mathcal{F}_n^{(m+1)} = \frac{A_n}{3}\mathcal{F}_n^{(m)} + \frac{C_n}{3};\ \mathcal{F}_n^{(0)}=\mathcal{F}_n^{(0)}.
\end{equation}
In practice, we set all the entries to $\mathcal{F}_n^{(0)}$ as $1/n$.
Finally, after obtaining the solution $\mathcal{F}_n^{\ast}$, we pad $N-n$ zeros on its right so that the resulting length-$N$ node embeddings can be fed into the message passing module.
Theorem \ref{thm:spectrum} and Corollary \ref{thm:convergence} prove that the fixed-point iteration (\ref{eq:fixed-point-iteration}) will converge to the unique solution of Eq~(\ref{eq:reduced-equation}) with a uniform speed for all tree topologies $\tau_n$, the number of leaves $n$, and the initial condition $\mathcal{F}_n^{(0)}$.

\begin{theorem}\label{thm:spectrum}
For a tree topology $\tau_n$ with $n$ leaf nodes, let $\tau^o_n$ be the subgraph of $\tau_n$ which only contains the internal nodes and the branches among them and $A_n$ be the interior adjacency matrix of $\tau_n$.
Let $\rho(\tau^o_n)$ be the spectral radius of $\tau^o_n$ defined as $\rho(\tau^o_n) = \lambda_{\max}(A_n)$, where $\lambda_{\max}(\cdot)$ denotes the largest absolute eigenvalue of a matrix.
Then for any $\tau_n$ and $n$, it holds
\begin{equation*}
\rho(\tau^o_n) \leq 2\sqrt{2}.
\end{equation*}
\end{theorem}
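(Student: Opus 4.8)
The plan is to bound the spectral radius of the interior adjacency matrix $A_n$ by analyzing the structure of $\tau^o_n$, the subtree on internal nodes. The key observation is that $\tau^o_n$ is itself a tree (it is connected and acyclic, being a subgraph of the tree $\tau_n$ obtained by deleting leaves), and moreover every internal node of $\tau_n$ has degree exactly $3$ in $\tau_n$. When we restrict to $\tau^o_n$, each internal node loses one edge for every leaf of $\tau_n$ attached to it, so the degree of a node in $\tau^o_n$ is at most $3$. Thus $\tau^o_n$ is a tree with maximum degree at most $3$.

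First I would recall the classical bound that for any graph $G$ with maximum degree $\Delta$, the spectral radius satisfies $\rho(G)\le\Delta$; this already gives $\rho(\tau^o_n)\le 3$, but not the sharper constant $2\sqrt{2}$. To get $2\sqrt{2}$ I would invoke the stronger fact that for a \emph{tree} (or more generally a graph) the spectral radius is bounded by $2\sqrt{\Delta-1}$ — this is the bound coming from the universal cover being the infinite $\Delta$-regular tree, whose adjacency operator has spectral radius exactly $2\sqrt{\Delta-1}$, combined with the fact that a finite tree embeds into (is covered by) this infinite tree. With $\Delta=3$ this yields $\rho(\tau^o_n)\le 2\sqrt{2}$ uniformly in $n$ and in the shape of $\tau_n$, which is exactly the claim.

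Concretely, the cleanest self-contained route is: let $d_i$ denote the degree of node $u_i$ in $\tau^o_n$, set $D=\mathrm{diag}(\sqrt{d_i})$ (handling isolated or degenerate small cases $n\le 4$ separately by direct inspection), and consider the similarity $B_n = D^{-1}A_n D$. For a tree one can exhibit an explicit diagonal conjugation, or argue via the Perron–Frobenius eigenvector $x$ of $A_n$ (nonnegative, since $A_n\ge 0$): writing $\rho x_i=\sum_{j\sim i}x_j$ and using a weighting argument with weights $\sqrt{d_i}$ together with Cauchy–Schwarz $\bigl(\sum_{j\sim i}x_j\bigr)^2\le d_i\sum_{j\sim i}x_j^2$, one derives $\rho^2\sum_i x_i^2\,d_i \le \sum_i d_i\sum_{j\sim i}x_j^2 = \sum_j x_j^2\sum_{i\sim j}d_i$. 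Since $\tau^o_n$ is a tree with $\Delta\le 3$, one shows $\sum_{i\sim j}d_i \le (\Delta-1)d_j + 1 \le 2d_j+1$ edge-by-edge — more carefully, the sum of neighbor-degrees admits the bound needed to close the inequality $\rho^2\le 4(\Delta-1)=8$. I would present this, or simply cite the standard result $\rho(T)\le 2\sqrt{\Delta-1}$ for trees.

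The main obstacle is handling the boundary/degenerate cases and making the ``$\tau^o_n$ is a tree with max degree $\le 3$'' claim fully rigorous: for small $n$ (e.g. $n=3$, where $\tau^o_n$ is a single node and $A_n=0$; $n=4$, where $\tau^o_n$ is a single edge with $\rho=1$), and for the general combinatorial verification that deleting all $n$ leaves from the bifurcating tree $\tau_n$ leaves a connected acyclic graph in which each surviving node has degree at most $3$. Once that structural fact is nailed down, the spectral bound $2\sqrt{2}$ follows either from the cited tree inequality or from the short Cauchy–Schwarz computation sketched above; I expect the spectral estimate itself to be routine and the bookkeeping on degrees to be the only delicate point.
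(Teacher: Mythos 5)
Your structural reduction is sound and rests on the same fact the paper uses: $\tau^o_n$ is a tree of maximum degree at most $3$, and the claim is exactly the classical bound $\rho(T)\le 2\sqrt{\Delta-1}$ for trees of maximum degree $\Delta$. If you simply cite that standard result (a finite tree of maximum degree $3$ sits inside the infinite $3$-regular tree, whose adjacency operator has norm $2\sqrt{2}$), your argument is complete; the only difference from the paper is that the paper proves this bound from scratch rather than citing it. Concretely, the paper roots $\tau^o_n$ at a node $c$, conjugates $A_n$ by $D=\mathrm{diag}(2^{d_u/2})$ with $d_u$ the graph distance to $c$, and bounds $\rho$ by the largest row sum of $DA_nD^{-1}$: each row has at most one entry $\sqrt{2}$ (toward the parent) and at most two entries $1/\sqrt{2}$ (toward children), giving $\sqrt{2}+2/\sqrt{2}=2\sqrt{2}$. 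So the citation route buys brevity; the paper's route is precisely the short self-contained proof of the statement you would cite. The degenerate cases you flag ($n=3$, $A_n=0$; $n=4$, $\rho=1$) are indeed trivial.

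The self-contained alternative you sketch, however, does not close. From $\rho x_i=\sum_{j\sim i}x_j$ and Cauchy--Schwarz one gets $\rho^2\sum_i x_i^2\le\sum_j x_j^2\sum_{i\sim j}d_i$ (your displayed left-hand side $\rho^2\sum_i x_i^2 d_i$ is not what this computation yields), and the inequality you then invoke, $\sum_{i\sim j}d_i\le(\Delta-1)d_j+1$, is false for trees with $\Delta=3$: take a vertex $j$ of degree $3$ whose three neighbors each also have degree $3$ (a center whose three neighbors each carry two further pendant vertices); then $\sum_{i\sim j}d_i=9>7=2d_j+1$. Without a correct replacement, this route only gives $\sum_{i\sim j}d_i\le\Delta d_j\le\Delta^2$, i.e.\ $\rho\le\Delta=3$, which misses the constant $2\sqrt{2}$. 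If you want self-containment, replace the $\sqrt{d_i}$-weighting by the distance-based weighting $2^{d_u/2}$ (or use the universal-cover/subgraph argument): the asymmetric weights $\sqrt{2}$ toward the parent and $1/\sqrt{2}$ toward each of at most two children are exactly what produce the bound $2\sqrt{2}$.
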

\begin{proof}\
Without loss of generality, we select a node $c$ in $\tau^o_n$ as the ``root node'' and denote the distance between a node $u$ and $c$ by $d_u$, which induces a hierarchical structure on $\tau^o_n$.
Consider a matrix $D = \mathrm{diag}\{2^{d_u/2}, u\in V^o\}$ and it holds that $DA_nD^{-1}$ and $A_n$ share the same eigenvalues.
Note that each row of $A$ and $DA_nD^{-1}$ has up to 3 non-zero entries.
Now for each row of $DA_nD^{-1}$ and the corresponding node $u$, we make the following analysis.
\begin{itemize}
    \item If $u=c$, then each non-zero entry in this row equals to $1/\sqrt{2}$.
    \item If $u$ is a leaf node, then there is only one non-zero entry $\sqrt{2}$ in this row.
    \item For the remaining cases, as $u$ have one parent node and at most two child nodes, there is a $\sqrt{2}$ and at most two $1/\sqrt{2}$ entries in this row.
\end{itemize}

For all these cases, the row sum is less than or equal to $2\sqrt{2}$ which consistently holds for arbitrary topological structures of $\tau^o_n$ and the number of nodes $n$.
By the Perron–Frobenius theorem for positive matrices, $\lambda_{\max}(DA_nD^{-1})$ is upper bounded by the largest row sum of $DA_nD^{-1}$. Therefore, we conclude that $\rho(\tau^o_n) \leq 2\sqrt{2}$.
This proof is inspired by \citet[Section 4.2]{spielman2025graph}.
\end{proof}
\begin{corollary}\label{thm:convergence}
The fixed-point iteration algorithm (\ref{eq:fixed-point-iteration}) will converge linearly with rate $\frac{2\sqrt{2}}{3}$.
\end{corollary}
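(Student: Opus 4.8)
The plan is to reduce the statement entirely to the spectral bound of Theorem~\ref{thm:spectrum} via a standard error-recursion argument, the only subtlety being that an eigenvalue bound must be converted into a genuine per-step contraction. First I would observe that the interior adjacency matrix $A_n$ is symmetric, since $\tau^o_n$ is an undirected graph; hence its operator $2$-norm equals its spectral radius, $\|A_n\|_2 = \rho(\tau^o_n) \le 2\sqrt{2}$ by Theorem~\ref{thm:spectrum}. Consequently the iteration matrix $A_n/3$ satisfies $\|A_n/3\|_2 \le \tfrac{2\sqrt{2}}{3} < 1$, so $I - A_n/3$ is invertible and Eq~(\ref{eq:reduced-equation}) admits the unique solution $\mathcal{F}_n^{\ast} = (I - A_n/3)^{-1}C_n/3$.

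Next I would track the error $e^{(m)} := \mathcal{F}_n^{(m)} - \mathcal{F}_n^{\ast}$. Subtracting the fixed-point identity $\mathcal{F}_n^{\ast} = \tfrac{A_n}{3}\mathcal{F}_n^{\ast} + \tfrac{C_n}{3}$ from the update (\ref{eq:fixed-point-iteration}) yields $e^{(m+1)} = \tfrac{A_n}{3}e^{(m)}$, and hence by induction $e^{(m)} = (A_n/3)^m e^{(0)}$. Taking norms, using submultiplicativity, and inserting the bound above gives
\[
\bigl\|\mathcal{F}_n^{(m)} - \mathcal{F}_n^{\ast}\bigr\|_2 \le \Bigl(\tfrac{2\sqrt{2}}{3}\Bigr)^{m}\bigl\|\mathcal{F}_n^{(0)} - \mathcal{F}_n^{\ast}\bigr\|_2 ,
\]
which is precisely linear convergence with rate $\tfrac{2\sqrt{2}}{3}$, holding uniformly over all tree topologies $\tau_n$, all numbers of leaves $n$, and all initializations $\mathcal{F}_n^{(0)}$ (in particular the choice $\mathcal{F}_n^{(0)} \equiv 1/n$ used in practice).

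The one step that needs care is the passage from the eigenvalue bound of Theorem~\ref{thm:spectrum} to the norm inequality $\|(A_n/3)^m\|_2 \le (2\sqrt{2}/3)^m$: for a general matrix $M$, $\rho(M) < 1$ only controls the asymptotic rate, not each individual step, and powers of $M$ can transiently grow. This is exactly where symmetry of $A_n$ is used, since it forces $\|A_n\|_2 = \rho(A_n)$; equivalently one may write an orthogonal diagonalization $A_n = U\Lambda U^{\top}$ and compute $\|(A_n/3)^m\|_2 = \max_i |\lambda_i/3|^m \le (2\sqrt{2}/3)^m$ directly. I do not expect any further obstacle; the remaining manipulations are the routine recursion above.
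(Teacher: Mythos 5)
Your proposal is correct and follows essentially the same route as the paper: subtract the fixed point from the iteration to get the error recursion $e^{(m+1)} = (A_n/3)\,e^{(m)}$, then bound the contraction factor via Theorem~\ref{thm:spectrum}. The only difference is that you explicitly justify the passage from the spectral-radius bound to the operator-norm bound through the symmetry of $A_n$ (which also gives invertibility of $I - A_n/3$ directly), a step the paper's proof uses implicitly without comment.
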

\begin{proof}\
Let $\mathcal{F}_n^\ast$ be the solution to $\mathcal{F}_n^\ast = A_n\mathcal{F}_n^\ast/3+C_n/3$. The existence and uniqueness of $\mathcal{F}_n^\ast$ are guaranteed by the fact that $I-A_n/3$ is a full-rank matrix.
Subtracting $\mathcal{F}_n^\ast$ from both sides lead to
\[
\mathcal{F}_n^{(m+1)} - \mathcal{F}_n^\ast = (A_n/3)(\mathcal{F}_n^{(m)} - \mathcal{F}_n^\ast)
\]
and thus
\[
\|\mathcal{F}_n^{(m+1)} - \mathcal{F}_n^\ast\|_2 \leq (\|A\|_2/3)\|\mathcal{F}_n^{(m)} - \mathcal{F}_n^\ast\|_2
\]
By Theorem \ref{thm:spectrum}, we conclude that $\|\mathcal{F}_n^{(m)} - \mathcal{F}_n^\ast\|_2\leq\left(\frac{2\sqrt{2}}{3}\right)^m \|\mathcal{F}_n^{(0)} - \mathcal{F}_n^\ast\|_2$.
\end{proof}

Unlike the two-pass algorithm for Dirichlet energy minimization, the fixed-point iteration can be easily vectorized over different tree topologies and nodes, making it suitable for fast computation on CUDA.
By using $\|\mathcal{F}_n^{(m)} - \mathcal{F}_n^\ast\|_2/n < \varepsilon$ as the stopping criterion, the required number of iterations $M_\varepsilon$ is a constant independent of the tree topologies.
Moreover, by noting that the fixed-point iteration (\ref{eq:fixed-point-iteration}) is equivalent to $\bar{\mathcal{F}}_n^{(2^{m+1})} = \bar{A}_n^{2^m} \bar{\mathcal{F}}_n^{(2^m)}$ which repetitively updates $\bar{A}_n^{2^{m+1}} = \left(\bar{A}_n^{2^{m}}\right)^2$, the number of iterations can be further reduced to $\log_2 M_{\varepsilon}$ and we call this strategy \emph{the power trick}.
With the power trick, the computational complexity of fixed-point iteration over $B$ tree topologies can be reduced to $O(Bn^2\log_2 M_\varepsilon)$.

\begin{remark}
For the complexity estimation $O(Bn^2\log_2 M_\varepsilon)$, the computation over the dimension $B$ and $n$ can be efficiently vectorized, while the computation over $\log M_\varepsilon$ is still sequential.
\end{remark}

\begin{remark}
After adding a new leaf node to $\tau_n$, a local modification can be applied to $A_n$ and $C_n$ to form the $A_{n+1}$ and $C_{n+1}$. Therefore, the time complexity of computing the adjacency matrices of a tree topology is $O(1)$.
Algorithm \ref{alg:fixed-point} shows the full procedure of the fixed-point iteration when autoregressively building the tree topology.
\end{remark}

\begin{algorithm}[h]
\caption{Fixed-point Algorithm for Topological Node Embeddings}
\label{alg:fixed-point}
\KwIn{A decision sequence $D=(e_3,\ldots,e_{N-1})$ corresponding to $\tau$. A threshold value $\varepsilon$.}
\KwOut{The topological node embeddings of each subtree $\tau_n$}
Initialize the adjacency matrices $A_3$ and $C_3$\;
\For{
$n = 3, \ldots, N-1$
}{
Compute $\bar{A}_n$ from $A_n$ and $C_n$ using Eq~(\ref{eq:matrix-decomp})\;
Initialize $\mathcal{F}_n^{(1)}=(I_n, 1_n/n)$ and $m=1$\;
Compute $\mathcal{F}_n^{(2)}=\bar{A}\mathcal{F}_n^{(1)}$ and $\bar{A}^2=\bar{A}\cdot \bar{A}$\;
\While{$\|\mathcal{F}_n^{(2^m)}-\mathcal{F}_n^{(2^{m-1})}\|_2\geq \varepsilon$}{
Compute $\mathcal{F}_n^{(2^{m+1})} = \bar{A}_n^{2^{m}}\mathcal{F}_n^{(2^{m})}$\;
Compute $\bar{A}_n^{2^{m+1}}=\bar{A}_n^{2^{m}}\cdot \bar{A}_n^{2^{m}}$;
$m=m+1$\;
}
Pad $N-n$ zeros on the right of each row of  $\mathcal{F}^{(2^m)}_n$\;
Add the new leaf node $x_{n+1}$ to the edge $e_n$\;
Locally modify the adjacency matrices $A_n$ and $C_n$ to obtain $A_{n+1}$ and $C_{n+1}$.
}
\end{algorithm}

\paragraph{Attention-based global message passing}
After obtaining the topological node embeddings $\mathcal{F}_n^\ast$ with the fixed-point iteration algorithm, it is fed into a message passing module to form the distribution over edge decisions.
To design an edge distribution that captures the global information of the tree topology, we substitute the GNNs with the powerful attention mechanism \citep{vaswani2017attention}.
Specifically, we first use the attention mechanism to compute a graph representation vector $r_n\in\mathbb{R}^d$, i.e.,
\begin{subequations}\label{eq:global-representation}
\arraycolsep=1.8pt
\def\arraystretch{1.5}
\begin{align}
\bar{r}_n &=F_{\textrm{graph}}(q_n, L(\mathcal{F}_n^\ast), L(\mathcal{F}_n^\ast)),\label{eq:global-representation-a}\\
r_n &= R_{\textrm{graph}}(\bar{r}_n),\label{eq:global-representation-b}
\end{align}
\end{subequations}
where $F_{\textrm{graph}}$ is the graph pooling function implemented as a multi-head attention block \citep{vaswani2017attention}, $R_{\textrm{graph}}$ is the graph readout function implemented as a 2-layer MLP, $q_n\in\mathbb{R}^d$ is a learnable query vector, and $L:\mathbb{R}^N\to \mathbb{R}^d$ is an embedding map implemented as a 2-layer MLP.
Here, the multi-head attention block $M=\mathrm{MHA}(Q,K,V)$ is defined as
\begin{subequations}\label{eq:mha}
\arraycolsep=1.8pt
\def\arraystretch{1.5}
\begin{align}
H_i&=\mathrm{softmax}\left(\frac{(QW_i^Q)(KW_i^K)'}{\sqrt{d/h}}\right)\cdot (VW_i^V),\\
M &= \mathrm{CONCAT}\left(H_1,\ldots,H_h\right)W^O,\label{eq:mha-b}
\end{align}
\end{subequations}
where $W_i^Q,W_i^K,W_i^V\in\mathbb{R}^{d\times\frac{d}{h}}$ and $W^O\in\mathbb{R}^{d\times d}$ are learnable matrices, $h$ is the number of heads, and $\mathrm{CONCAT}$ is the concatenation operator along the node feature axis.
Intuitively, we have used a global vector $q_n$ to query all the node features and obtained a representation vector $r_n$ for the whole tree topology $\tau_n$.
We emphasize that Eq~(\ref{eq:global-representation}) enjoys time complexity $O(nd+d^2)$ instead of the $O(n^2d+nd^2)$ of common multi-head attention blocks, as $q_n$ is a one-dimensional vector.

We now compute the edge decision distribution to decide where to add the next leaf node, similarly to ARTree.
To incorporate global information into the edge decision, we utilize the global representation vector $r_n$ to compute the edge features.
Concretely, the feature of an edge $e=(u,v)$ is formed by
\begin{subequations}\label{eq:edge-readout}
\arraycolsep=1.8pt
\def\arraystretch{1.5}
\begin{align}
p_n(e) &= F_{\textrm{edge}}\left(\{f_n(u),f_n(v)\}\right),\\
r_n(e) &= R_{\textrm{edge}}\left(\mathrm{CONCAT}(p_n(e), r_n)+b_n\right),\label{eq:edge-readout-b}
\end{align}
\end{subequations}
where $F_{\textrm{edge}}$ is an invariant edge pooling function implemented as an elementwise maximum operator, $R_{\textrm{edge}}$ is the edge readout function implemented as a 2-layer MLP with scalar output, and $b_n$ is the sinusoidal positional embedding \citep{vaswani2017attention} of the time step $n$.
The time complexity of these MLPs in Eq~(\ref{eq:edge-readout}) is $O(nd^2)$.

\paragraph{Edge decision distribution}
Similarly to ARTree, we builds the edge decision distributions in ARTreeFormer in an autoregressive way.
That is, we directly readout the representation vector $r_n$
to calculate the edge decision distribution $Q_{\bm{\phi}}(\cdot|e_{<n})$ using
\begin{equation}\label{eq:probability-vector}
Q_{\bm{\phi}}(\cdot|e_{<n})=\mathrm{Discrete}(\alpha_n),\ \alpha_n = \mathrm{softmax}\left([r_n(e)]_{e\in E_n}\right),
\end{equation}
and grow $\tau_n$ to $\tau_{n+1}$ by attaching the next leaf node $x_{n+1}$ to the sampled edge (Algorithm \ref{alg:generation}).

The above node embedding module and message passing module circularly continue until an ordinal tree topology of $N$, $\tau_N$, is constructed, whose ARTreeFormer-based probability is defined as
\begin{equation}
Q_{\bm{\phi}}(\tau_N) = \prod_{n=3}^{N-1}Q_{\bm{\phi}}(e_n|e_{<n}),
\end{equation}
where $\bm{\phi}$ are the learnable parameters and $Q_{\bm{\phi}}(e_n|e_{<n})$ is defined in Eq~(\ref{eq:probability-vector}).
We summarize the time complexity of ARTreeFormer in Proposition \ref{prop:artreeformer}.
\begin{proposition}[Time complexity of ARTreeFormer]
\label{prop:artreeformer}
For generating $B$ tree topologies with $N$ leaf nodes, the time complexity of ARTreeFormer is $O(BN^3\log M_\varepsilon+BN^2d^2)$.
But in the ideal case of perfect vectorization, the time complexity of ARTreeFormer is $O(N(\log M_\varepsilon+1))$, where $\log M_\varepsilon$ is a constant independent of $N$.
\end{proposition}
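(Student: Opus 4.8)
The plan is to add up, over the $N-2$ sequential leaf-addition steps and over the $B$ tree topologies processed together, the cost of the two ingredients of each step: the fixed-point iteration that produces the topological node embeddings and the attention-based global message passing. I would establish the worst-case (unvectorized) bound first, then specialize to the idealized vectorized setting.

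For the node embedding module I would invoke Corollary~\ref{thm:convergence}: because the iteration~(\ref{eq:fixed-point-iteration}) contracts at the fixed rate $2\sqrt{2}/3<1$ uniformly in $\tau_n$, $n$, and the initialization, the normalized stopping rule $\|\mathcal{F}_n^{(m)}-\mathcal{F}_n^\ast\|_2/n<\varepsilon$ is triggered after a number of iterations $M_\varepsilon$ that does not depend on $n$ (the extra $1/n$ only makes the criterion easier to meet). With the power trick, the relevant iterates $\bar{\mathcal{F}}_n^{(2^m)}$ are obtained by $\log_2 M_\varepsilon$ successive matrix products on $O(n)$-dimensional objects, which at step $n$ costs $O(n^2\log M_\varepsilon)$ per tree as recorded after the corollary. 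Summing $\sum_{n=3}^{N-1}n^2=O(N^3)$ and multiplying by $B$ gives $O(BN^3\log M_\varepsilon)$ for this module; the subsequent MLP $L:\mathbb{R}^N\to\mathbb{R}^d$ applied to the $O(n)$ padded node embeddings adds $O(nNd)$ per step, i.e.\ $O(BN^3d)$ in total, which is absorbed into the first term once $d$ and $\log M_\varepsilon$ are regarded as constants.

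For the message passing module I would tally the per-step costs already identified in the text: the graph pooling and readout in~(\ref{eq:global-representation}) cost $O(nd+d^2)$ because the query $q_n$ is a single vector rather than a length-$n$ sequence; the edge pooling and readout MLPs in~(\ref{eq:edge-readout}) cost $O(nd^2)$; and the softmax in~(\ref{eq:probability-vector}) costs $O(n)$. The dominant $O(nd^2)$ term, summed over $n=3,\dots,N-1$ and over $B$ trees, yields $O(BN^2d^2)$. Combining the two modules gives the claimed worst-case complexity $O(BN^3\log M_\varepsilon+BN^2d^2)$.

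For the ideal vectorized bound, I would observe that batching eliminates the factor $B$, the $O(n)$ nodes and $O(n)$ edges within a tree, and all entrywise and matrix-level arithmetic, leaving only two irreducibly sequential structures: the $N-2$ autoregressive steps (since $\tau_{n+1}$ is built from $\tau_n$) and, inside each step's node embedding, the length-$\log_2 M_\varepsilon$ chain of squarings/products of the power trick. Each step therefore has parallel depth $O(\log M_\varepsilon)$ for the embeddings plus $O(1)$ for the constant-depth attention and MLP blocks, for a total of $O(N(\log M_\varepsilon+1))$, with $\log M_\varepsilon$ constant in $N$ by Corollary~\ref{thm:convergence}. I expect the only real care points to be (a) confirming the $n$-independence of $M_\varepsilon$ under the normalized criterion, which follows from the uniform rate in Corollary~\ref{thm:convergence}, and (b) correctly distinguishing the vectorizable operations from the two sequential ones, so that the depth count does not inadvertently collapse the autoregressive recursion or the power-trick chain.
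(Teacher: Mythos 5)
Your proposal is correct and mirrors the paper's own accounting: the per-step costs $O(Bn^2\log_2 M_\varepsilon)$ for the power-trick fixed-point iteration and $O(nd+d^2)+O(nd^2)$ for the attention pooling and edge readout are summed over the $N-2$ autoregressive steps to give $O(BN^3\log M_\varepsilon+BN^2d^2)$, and the vectorized bound is obtained exactly as you argue, by noting that only the leaf-addition recursion and the $\log_2 M_\varepsilon$ squaring chain are inherently sequential while everything else batches over trees, nodes, and edges. Your extra remarks on the embedding MLP term and the $n$-independence of $M_\varepsilon$ under the normalized stopping rule are consistent with the paper's treatment and do not change the route.
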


Compared to ARTree, the greatly improved computational efficiency of ARTreeFormer mainly comes from two aspects. 
\textbf{First}, the fixed-point iteration algorithm in ARTreeFormer for topological node embeddings can be easily vectorized across different tree topologies and different nodes, since they do not rely on traversals over tree topologies.
\textbf{Second}, the global message passing in ARTreeFormer forms the global representation only in one pass through the attention mechanism instead of gathering the neighborhood information repetitively with GNNs.
We depict the the pipeline of the leaf node addition of ARTreeFormer in Fig \ref{fig:artreeformer}.

\begin{figure}[h]
    \centering
    \includegraphics[width=\linewidth]{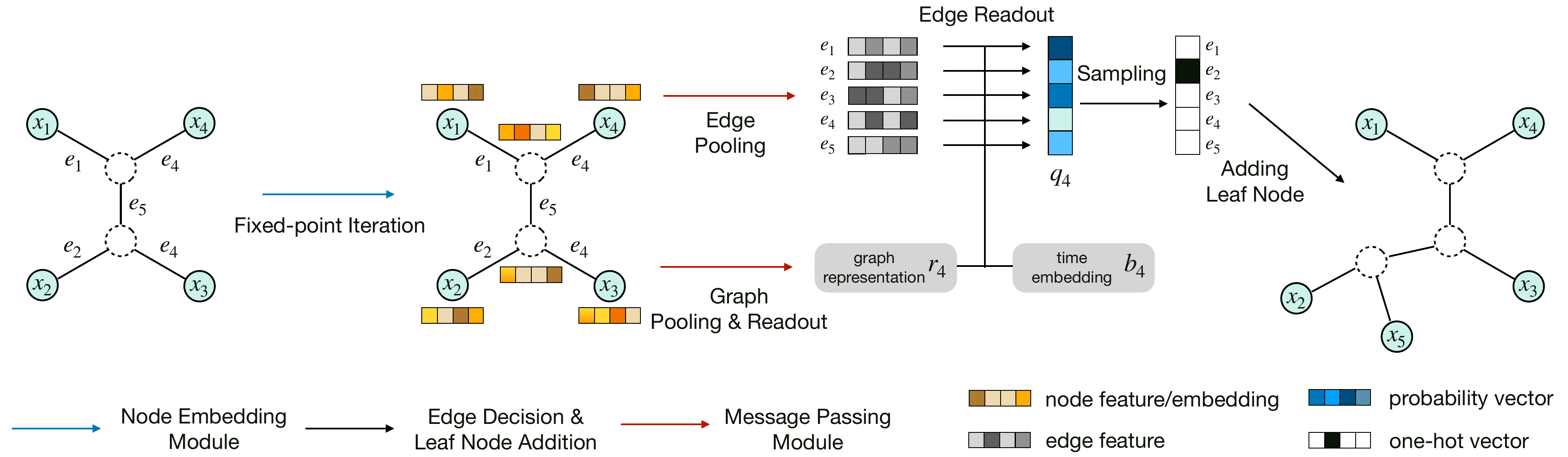}
    \vspace{0.2cm}
    \caption{\bf An illustration of ARTreeFormer for growing an ordinal tree topology $\tau_4$ of rank $4$ to an ordinal tree topology $\tau_5$ of rank $5$.
    }
    \label{fig:artreeformer}
\end{figure}

In Fig \ref{fig:artree-cpu-time} (left, middle), for the node embedding module on CPU/CUDA, the time consumption of ARTreeFormer is less than $10\%$ of ARTree, and this number is $50\%$ for the message passing module on CPU/CUDA. 
Moreover, both two modules of ARTreeFormer enjoy a significant time consumption drop on CUDA compared to CPU, since CUDA is more powerful at handling large tensor multiplications.
To further verify the vectorization capability of ARTreeFormer, we compare the runtime for generating tree topologies with or without vectorization (i.e., simultaneously or sequentially) in Fig \ref{fig:artree-cpu-time} (right), where vectorization greatly improves computational efficiency.
The vectorization capability of ARTreeFormer further allows for training with a larger batch size (note the batch size is 10 in ARTree), which is a common setting in modern deep learning methods.

\section{Results}
In this section, we demonstrate the effectiveness and efficiency of ARTreeFormer on three benchmark tasks: maximum parsimony, tree topology density estimation (TDE), and variational Bayesian phylogenetic inference (VBPI).
Although the pre-selected leaf node order in ARTreeFormer may not be related to the relationships among species, this evolutionary information is already contained in the training data set (for TDE) or the target posterior distribution (for maximum parsimony and VBPI), and thus can be learned by ARTreeFormer.
Noting that the main contribution of ARTreeFormer is improving the tree topology model, we select the first two tasks because they only learn the tree topology distribution and can better demonstrate the superiority of ARTreeFormer.
The third task, VBPI, is selected as a standard benchmark task for Bayesian phylogenetic inference and evaluates how well ARTreeFormer collaborates with a branch length model.
It should be emphasized that we mainly pay attention to the computational efficiency improvement of ARTreeFormer and only expect it to attain similar accuracy to ARTree.
Throughout this section, the run times of ARTree are reproduced using its official codebase\footnote{\href{https://github.com/tyuxie/ARTree}{\texttt{https://github.com/tyuxie/ARTree}}}. 

\paragraph{Experimental setup} 
For TDE and VBPI, we perform experiments on eight data sets which we will call DS1-8. 
These data sets, consisting of sequences from 27 to 64 eukaryote species with 378 to 2520 site observations, are commonly used to benchmark phylogenetic MCMC methods \citep{Hedges90, Garey96, Yang03, Henk03, Lakner08, Zhang01, Yoder04, Rossman01, Hhna2012-pm, Larget2013-et, Whidden2014QuantifyingME}.
For the Bayesian setting in MrBayes runs \citep{ronquist2012mrbayes}, we assume a uniform prior on the tree topologies, an i.i.d. exponential prior $\mathrm{Exp}(10)$ on branch lengths, and the simple Jukes \& Cantor (JC) substitution model \citep{jukes1969evolution}.
We use the same ARTreeFormer structure across all the data sets for all three experiments.
Specifically, we set the dimension of node features to $d=100$, following \citet{xie2023artree}. 
The number of heads in all the multi-head attention blocks is set to $h=4$.
All the activation functions for MLPs are exponential linear units (ELUs)~\citep{ELU}.
We add a layer normalization block after each linear layer in MLPs and before each multi-head attention block, which stabilizes training and reduces its sensitivity to optimization tricks \citep{xiong2020transformerlayernorm}.
We also add a residual block after the multi-head attention block in the message passing step, which is standard in transformers.
The taxa order is set to the lexicographical order of the corresponding species names.
All models are implemented in PyTorch \citep{Paszke2019PyTorchAI} and optimized with the Adam \citep{ADAM} optimizer. 
All the experiments are run and all the runtimes are measured on a single CUDA-enabled NVIDIA A100 GPU.
The learning rate for ARTreeFormer is set to 0.0001 in all the experiments, which is the same as in ARTree \citep{xie2023artree}.

\subsection{Maximum parsimony problem}
We first test the performance of ARTreeFormer on solving the maximum parsimony problem.
We reformulate this problem as a Bayesian inference task with the target distribution
$P(\tau) = \exp(-\mathcal{P}(\tau,\bm{Y}))/Z$,
where $\mathcal{P}(\tau,\bm{Y})$ is the parsimony score defined in Eq~(\ref{eq:parsimony}) and $Z=\sum_{\tau}\exp(-\mathcal{P}(\tau,\bm{Y}))$ is the normalizing constant.
To fit a variational distribution $Q_{\bm{\phi}}(\tau)$, we maximize the following (annealed) multi-sample lower bound ($K=10$) in the $t$-th iteration
\begin{equation}
\mathcal{L}(\bm{\phi};\beta_t) =\mathbb{E}_{Q_{\bm{\phi}}(\tau^{1:K})} \log\left(\frac{1}{K}\sum_{i=1}^{K}\frac{\exp\left(-\beta_t\mathcal{P}(\tau_i,\bm{Y})\right)}{Q_{\bm{\phi}}(\tau_i)}\right),
\end{equation}
where $Q_{\bm{\phi}}(\tau^{1:K})=\prod_{i=1}^KQ_{\bm{\phi}}(\tau^{i})$ and  $\beta_t$ is the annealing schedule.
We set $\beta_t=\min\{1, 0.001+t/200000\}$ and collect the results after 400000 parameter updates.
We use the VIMCO estimator \citep{Mnih2016vimco} to estimate the stochastic gradients of $\mathcal{L}(\bm{\phi})$.

Fig \ref{fig:parsimony} shows the performances of different methods for the maximum parsimony problem on DS1. 
We run the state-of-the-art parsimony analysis software PAUP$^\ast$ \citep{swofford2003paup} to form the ground truth, which contains tree topologies with parsimony scores ranging from 4040 to the optimal score 4026. 
The left plot of Fig \ref{fig:parsimony} shows that both ARTreeFormer and ARTree can identify the most parsimonious tree topology found by PAUP$^\ast$ and provide comparably accurate posterior estimates.
In the right plot of Fig \ref{fig:parsimony}, the horizontal gap between two curves reflects the ratio of times needed to reach the same lower bound or negative parsimony score.
We see that ARTreeFormer is around four times faster than ARTree.

\begin{figure}[h]
    \centering
    \includegraphics[width=0.7\linewidth]{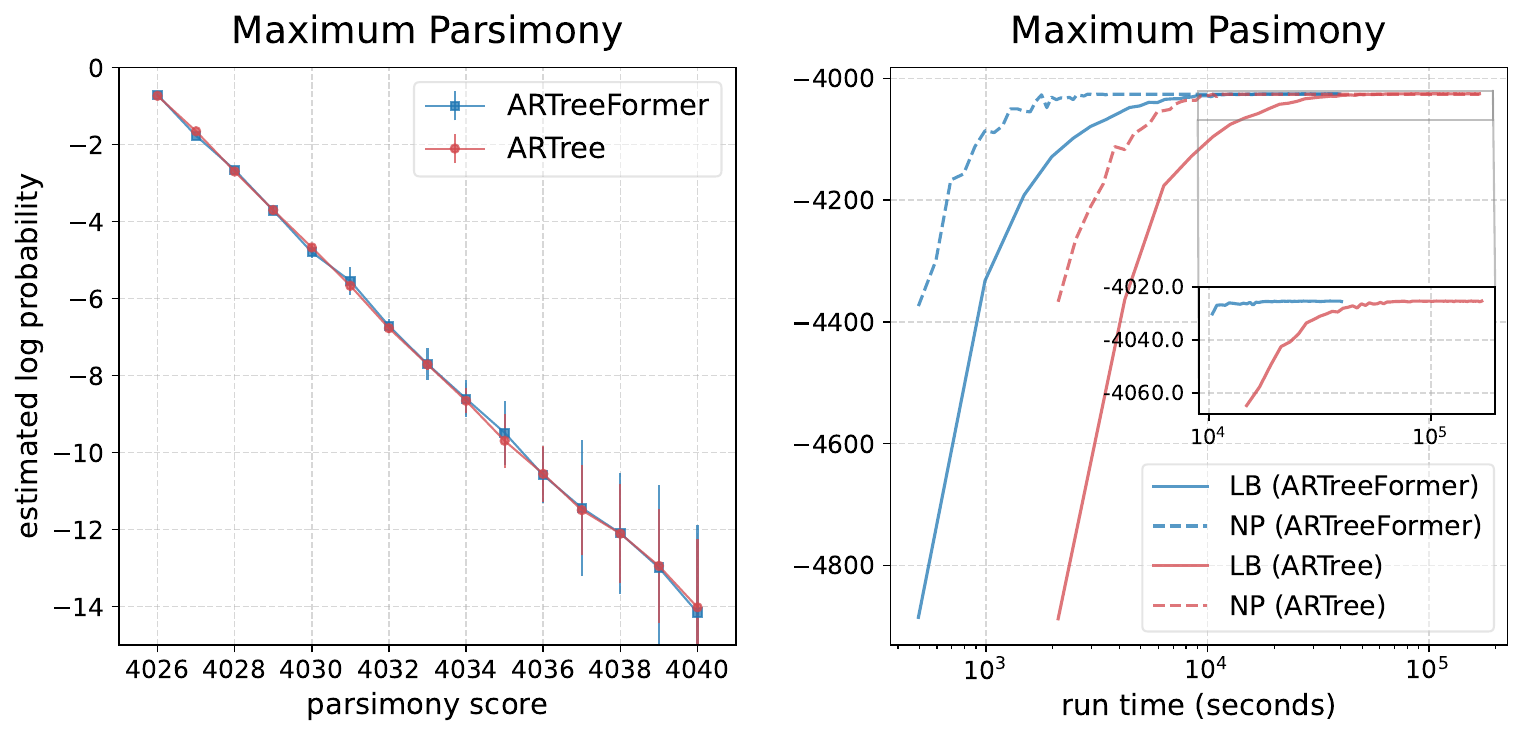}
    \caption{\textbf{Performances of ARTree and ARTreeFormer on the maximum parsimony problem.}
    Left: The estimated log probability $\log Q(\tau)$ versus the parsimony score $\mathcal{P}(\tau,\bm{Y})$ on DS1. 
    For different tree topologies with the same parsimony score, the mean of the estimated log probabilities is plotted as a dot with the standard deviation as the error bar.
    Right: The 10-sample lower bound (LB) and the negative parsimony score (NP) as a function of the run time on DS1.
    }
    \vspace{-0.3cm}
    \label{fig:parsimony}
\end{figure}

\subsection{Tree topology density estimation}
We further investigate the ability of ARTreeFormer to model tree topologies on the TDE task.
To construct the training data set, we run MrBayes \citep{ronquist2012mrbayes} on each data set with 10 replicates of 4 chains and 8 runs until the runs have ASDSF (the standard convergence criteria used in MrBayes) less than 0.01 or a maximum of 100 million iterations, collect the samples every 100 iterations, and discard the first 25\%, following \citet{Zhang2018SBN}.
The ground truth distributions are obtained from 10 extremely long single-chain MrBayes runs, each for one billion iterations,
where the samples are collected every 1000 iterations, with the first 25\% discarded as burn-in.
We train ARTreeFormer via maximum likelihood estimation using stochastic gradient ascent.
We compare ARTreeFormer to ARTree and SBN baselines: 
(i) for SBN-EM and SBN-EM-$\alpha$, the SBN model is optimized using the expectation-maximization (EM) algorithm, as done in \citet{Zhang2018SBN}; 
(ii) for SBN-SGA and ARTree, the corresponding models are fitted via stochastic gradient ascent, similarly to ARTreeFormer. For SBN-SGA, ARTree, and ARTreeFormer, the results are collected after 200000 parameter updates with a batch size of 10.

The left plot in Fig \ref{fig:DS1-TDE} shows a significant reduction in the training time and evaluation time of ARTreeFormer compared to ARTree on DS1-8.
To further demonstrate the benefit of vectorization over different tree topologies, we train ARTreeFormer on DS1 with different batch sizes, and report the Kullback-Leibler (KL) divergences in Fig \ref{fig:DS1-TDE} (right).
We see that a large batch size will only lead to a minor training speed drop, but will significantly benefit the training accuracy.
We can also observe a saturated approximation accuracy with a sufficiently large batch size.

\begin{figure}[h]
    \centering
    \includegraphics[width=0.7\linewidth]{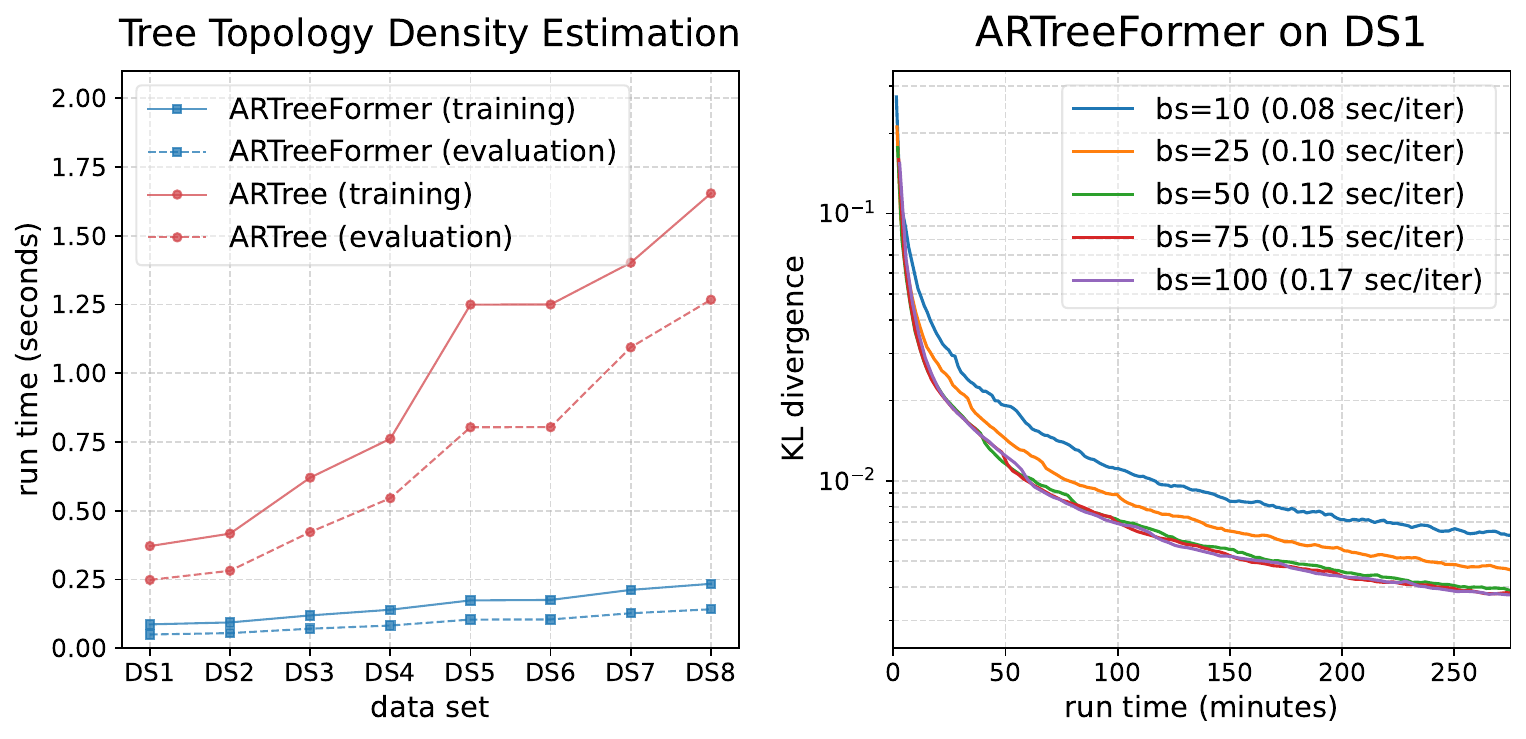}
    \caption{\textbf{Performance of ARTree and ARTreeFormer on the TDE task.}
    Left: The training time (per iteration) and evaluation time (per evaluating the probabilities of 10 tree topologies) of ARTree and ARTreeFormer across eight benchmark data sets for TDE (averaged over 100 trials).
    Right: The KL divergence to the ground truth on DS1 obtained by ARTreeFormer, as the batch size (bs) varies. The training speed measured by seconds per iteration is reported in the parenthesis.
    }
    \label{fig:DS1-TDE}
\end{figure}

The KL divergences between the ground truth and the probability estimation are reported in Table \ref{tab:tde}.
Although ARTreeFormer has only one attention layer for node features, it performs on par or better than ARTree, and consistently outperforms the SBN-related baselines, across all data sets.
See the probability estimation on individual tree topologies and an ablation study about the hyperparameters in Appendix \ref{app:tde-results}.

\begin{table}[h]
    \caption{\textbf{KL divergences to the ground truth of different methods across eight benchmark data sets.}
    }
    \label{tab:tde}
    \centering
    % \scriptsize
    \resizebox{\linewidth}{!}{
    \begin{tabular}{cccccccccc}
    \toprule
    \multirow{2}{*}{Data set} &\multirow{2}{*}{\# Taxa} & \multirow{2}{*}{\# Sites}  &\multirow{2}{*}{\shortstack{Sampled trees}}&\multirow{2}{*}{\shortstack{GT trees}} &\multicolumn{5}{c}{KL divergence to ground truth} \\
\cmidrule(l){6-10}
       &&&& & SBN-EM & SBN-EM-$\alpha$ & SBN-SGA &  ARTree & ARTreeFormer\\
       \midrule
       DS1 & 27 & 1949 & 1228&2784  & 0.0136 & 0.0130 & 0.0504  & \textbf{0.0045} & 0.0065 \\
       DS2 & 29 & 2520 & 7 &42 & 0.0199 & 0.0128 & 0.0118& \textbf{0.0097}& 0.0102\\
       DS3 & 36 & 1812 & 43 &351 & 0.1243 & 0.0882 & 0.0922 &0.0548&\textbf{0.0474} \\
       DS4 & 41 & 1137 & 828& 11505 & 0.0763&0.0637 & 0.0739 & 0.0299&\textbf{0.0267}\\
       DS5 & 50 & 378 &33752& 1516877  &0.8599&0.8218 & 0.8044 & 0.6266&\textbf{0.6199}\\
       DS6 & 50& 1133&35407& 809765&0.3016&0.2786 & 0.2674&0.2360&\textbf{0.2313}\\
       DS7 & 59& 1824&1125&11525 &0.0483&0.0399&0.0301 &0.0191&\textbf{0.0152} \\
       DS8 & 64& 1008&3067&82162 &0.1415&0.1236 &0.1177 & 0.0741&\textbf{0.0563}\\
    \bottomrule
    \end{tabular}
    }
\vskip0.3em
\begin{flushleft}  The ``Sampled trees'' column shows the numbers of unique tree topologies in the training sets. The ``GT trees'' column shows the numbers of unique tree topologies in the ground truth. The results are averaged over 10 replicates. The results of SBN-EM, SBN-EM-$\alpha$ are from \citet{Zhang2018SBN}, and the results of SBN-SGA and ARTree are from \citet{xie2023artree}.
\end{flushleft}
\end{table}

\subsection{Variational Bayesian phylogenetic inference}\label{sec:exp-vbpi}
Our last experiment is on VBPI, where we examine the performance of ARTreeFormer on tree topology posterior approximation.
Following \citet{xie2023artree}, we use the following annealed unnormalized posterior as our target at the $t$-th iteration
\begin{equation}
p(\tau,\bm{q}|\bm{Y},\beta_t)\propto p(\bm{Y}|\tau,\bm{q})^{\beta_t}p(\tau,\bm{q}),
\end{equation}
where $\beta_t=\min\{1,0.001+t/H\}$ is the annealing weight and $H$ is the annealing period.
We use the VIMCO estimator \citep{Mnih2016vimco} and the reparametrization trick \citep{VAE} to obtain the gradient estimates for the tree topology parameters and the branch lengths parameters, respectively.
The results are collected after 400000 parameter updates.

\paragraph{VBPI on DS1-8}
In this part we test the performance of VBPI on the eight standard benchmarks DS1-8, as considered in \citet{Zhang2018SBN, Zhang2019VBPI, Zhang2020VBPINF, Zhang2023VBPIGNN, xie2023artree, xie2024improving, xie2024vbpisibranch}.
We set $H=200000$ for the two more difficult dataset DS6 and DS7, and $H=100000$ for other data sets, following the setting in \citet{xie2023artree}.
We set $K=10$ for the multi-sample lower bound (\ref{eq:lower-bound}).
The results are collected after 400000 parameter updates.
To be fair, for all three VBPI-based methods (VBPI-SBN, VBPI-ARTree, and VBPI-ARTreeFormer), we use the same branch length model that is parametrized by GNNs with edge convolutional operator and learnable topological features as done in \citet{Zhang2023VBPIGNN}.
We also consider two alternative approaches ($\phi$-CSMC \citep{koptagel2022vaiphy}, GeoPhy \citep{mimori2023geophy}) that provide unconfined tree topology distributions and one MCMC based method (MrBayes) as baselines.

The left plot in Fig \ref{fig:ds1-vbpi} shows the lower bound as a function of the number of iterations on DS1. 
We see that although ARTreeFormer converges more slowly than SBN and ARTree at the beginning, it quickly catches up and reaches a similar lower bound in the end.
The middle plot in Fig \ref{fig:ds1-vbpi} shows that both ARTree and ARTreeFormer can provide accurate variational approximations to the ground truth posterior of tree topologies, and both of them outperform SBNs by a large margin.
In the right plot of Fig \ref{fig:ds1-vbpi}, we see that the computation time of ARTreeFormer is substantially reduced compared to ARTree. 
This reduction is especially evident for sampling time since it does not include the branch length generation, likelihood computation, and backpropagation.

\begin{figure}[h]
    \centering
    \includegraphics[width=\linewidth]{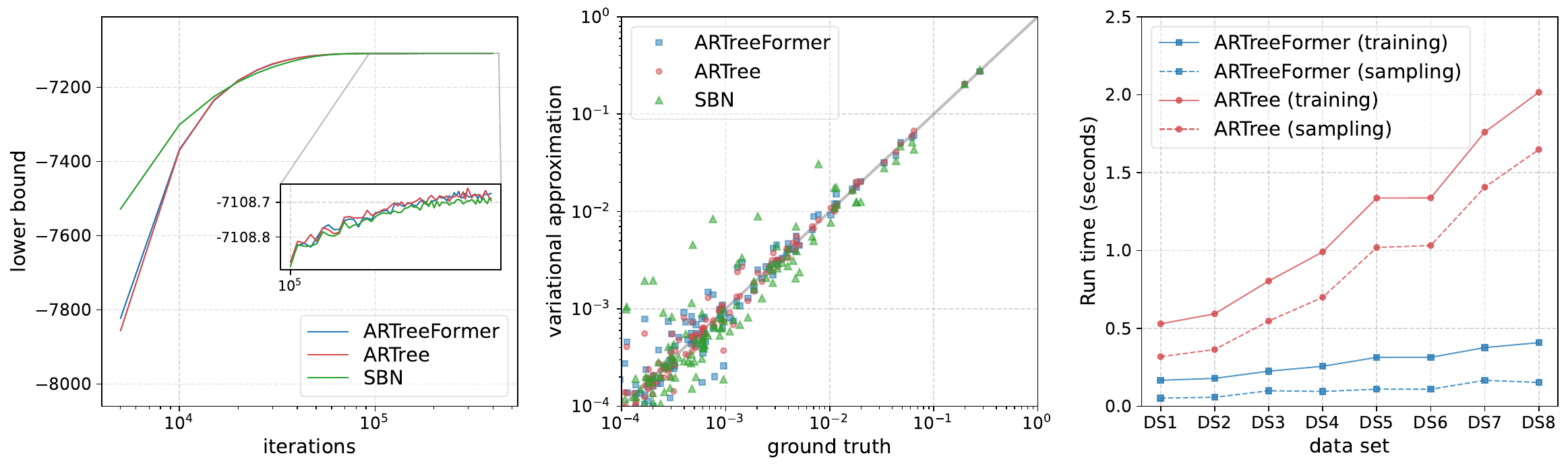}
    \caption{\textbf{Performances of different methods for VBPI.}
    Left: the 10-sample lower bound as a function of the number of iterations on DS1.
    Middle: the variational approximation v.s. the ground truth of the marginal distribution of tree topologies on DS1. 
    Right: Training time per iteration and sampling time (per sampling 10 tree topologies) across different data sets (averaged over 100 trials).
    }
    \label{fig:ds1-vbpi}
\end{figure}

Table \ref{tab:vbpi} shows the marginal likelihood estimates obtained by different methods on DS1-8, including the results of the stepping-stone (SS) method \citep{SS}, which is one of the state-of-the-art sampling based methods for marginal likelihood estimation.
We find that VBPI-ARTreeFormer provides comparable estimates to VBPI-SBN and VBPI-ARTree.
Compared to other VBPI variants, the methodological and computational superiority of ARTreeFormer is mainly reflected by its unconfined support (compared to SBN) and faster computation speed (compared to ARTree).
All VBPI variants perform on par with SS, while the other baselines ($\phi$-CSMC, GeoPhy) tend to provide underestimated results.
We also note that the standard deviations of ARTreeFormer can be smaller than those of ARTree and SBN on most data sets, which can be partially attributed to the potentially more accurate approximation.
Regarding the efficiency-accuracy trade-off, the simplified architecture in ARTreeformer is enough to maintain or even surpass the performance of ARTree.
We also provide more information on the memory and parameter size of different methods for VBPI in Appendix \ref{app:vbpi-results}.
Finally, it is worth noting that VBPI-mixture \citep{vbpimixture, bbvimixture} can provide a better marginal likelihood approximation by employing mixtures of tree models as the variational family.

\begin{table}[h]
\renewcommand\arraystretch{1.2}
\setlength{\tabcolsep}{0.1cm}{}
\caption{\bf Marginal likelihood estimates (in units of nats) of different methods across eight benchmark data sets for Bayesian phylogenetic inference. 
}
\label{tab:vbpi}
\centering
\resizebox{\linewidth}{!}{
\begin{tabular}{lcccccccc}
\toprule
\multicolumn{1}{c}{Data set} & DS1 & DS2 & DS3 & DS4 & DS5 & DS6 & DS7 & DS8 \\
\multicolumn{1}{c}{\# Taxa} & 27 &  29 &  36 & 41 & 50 &  50 & 59   &  64  \\
\multicolumn{1}{c}{\# Sites} & 1949  & 2520 & 1812  & 1137  & 378   & 1133  & 1824  &  1008  \\
\multicolumn{1}{c}{GT trees} & 2784 & 42 & 351 & 11505 & 1516877 & 809765 & 11525 & 82162 \\
\midrule
$\phi$-CSMC~\citep{koptagel2022vaiphy} & -7290.36(7.23)&-30568.49(31.34)&-33798.06(6.62)&-13582.24(35.08)&-8367.51(8.87)&-7013.83(16.99)&N/A &-9209.18(18.03) \\
GeoPhy~\citep{mimori2023geophy}& -7111.55(0.07)&-26368.44(0.13)&-33735.85(0.12)&-13337.42(1.32)&-8233.89(6.63)&-6733.91(0.57)&-37350.77(11.74)&-8660.48(0.78)\\
\rowcolor{gray!20}VBPI-SBN~\citep{Zhang2023VBPIGNN}&-7108.41(0.14)&\textbf{-26367.73(0.07)}&-33735.12(0.09)&-13329.94(0.19)&-8214.64(0.38)&-6724.37(0.40)&\textbf{-37332.04(0.26)}&\textbf{-8650.65(0.45)}\\
\rowcolor{gray!20}VBPI-ARTree~\citep{xie2023artree} & -7108.41(0.19) & \textbf{-26367.71(0.07)} & -33735.09(0.09)&\textbf{-13329.94(0.17)} & -8214.59(0.34) & -6724.37(0.46) &-37331.95(0.27) & -8650.61(0.48)\\
\rowcolor{gray!20}\textbf{VBPI-ARTreeFormer (ours)} &\textbf{-7108.43(0.13)}&\textbf{-26367.71(0.07)}&\textbf{-33735.08(0.08)}&\textbf{-13329.93(0.17)}&\textbf{-8214.63(0.30)}&\textbf{-6724.47(0.35)}& -37331.94(0.31)&-8650.63(0.47)\\
\midrule
MrBayes SS~\citep{SS}&-7108.42(0.18)&-26367.57(0.48)&-33735.44(0.50)&-13330.06(0.54)&-8214.51(0.28)&-6724.07(0.86)&-37332.76(2.42)&-8649.88(1.75)\\
\bottomrule
\end{tabular}
}
\vskip0.3em
\begin{flushleft}
The ``GT trees'' column shows the numbers of unique tree topologies in the ground truth, reflecting the diversity of the phylogenetic posterior.
The marginal likelihood estimates for ARTreeFormer are obtained by importance sampling with 1000 particles from the variational approximation and are averaged over 100 independent runs with standard deviation in the brackets.
A smaller variance is better.
The results of MrBayes SS, which serve as the ground truth, are from \citet{Zhang2019VBPI}.
The results of other methods are reported in their original papers.
\end{flushleft}
\end{table}

\paragraph{VBPI on influenza data} To further test the scalability and vectorization ability of ARTreeFormer, 
we consider the influenza data set with an increasing number - $N=25, 50, 75, 100$ - of nested hemagglutinin (HA) sequences \citep{Zhang22VBPI}.
These sequences were obtained from the NIAID Influenza Research Database (IRD) \citep{zhang2017influenza} through the website at \href{https://www.fludb.org/}{\texttt{https://www.fludb.org/}}, downloading all complete HA sequences that passed quality control, which were then subset to H7 sequences, and further downsampled using the Average Distance to the Closest Leaf (ADCL) criterion \citep{matsen2013minimizing}.
For all the VBPI based methods - SBN, ARTree, and ARTreeFormer, we set $H=100000$, and the results are collected after 400000 parameter updates.
For ARTree and ARTreeFormer, we use the same branch length model that is parametrized by GNNs with edge convolutional operator and learnable topological features as done in \citet{Zhang2023VBPIGNN}; for SBN, we use the bipartition-feature-based branch length model considered in \citet{Zhang2019VBPI}.

Table \ref{tab:influenza-mll} reports the marginal likelihood estimates of different methods on the influenza data set. We see that all three VBPI methods yield very similar marginal likelihood estimates to SS when $N=25, 50$.
For a larger number of sequences $N=75, 100$, SS tends to provide higher marginal
likelihood estimates than VBPI methods, albeit with larger variances which indicates the decreasing reliability of those estimates.
On the other hand, the variances of the estimates provided by VBPI methods are much smaller which implies more reliable estimates \citep{Zhang22VBPI}.
Compared to ARTree, ARTreeFormer can provide much better MLL estimates (also closer to SBN) while maintaining a relatively small variance, striking a better balance between approximation accuracy and reliability.

\begin{table}[h]
    \centering
\caption{\bf The marginal likelihood estimates (in units of nats) of different methods on the influenza data with up to 100 taxa.}
\resizebox{0.8\linewidth}{!}{
    \begin{tabular}{ccccc}
    \toprule
 Subset size ($N$) & MrBayes SS & VBPI-SBN & VBPI-ARTree& VBPI-ARTreeFormer  \\
    \midrule
25 & -13378.23(0.24) & -13378.38(0.06)&-13378.39(0.06)&-13378.38(0.06)\\ 
50 & -18615.82(1.57) &-18615.40(0.16) & -18615.31(0.18)&-18615.30(0.20)\\ 
75 &-23647.14(13.25) &-23681.85(0.27) & -23849.85(0.30)&-23763.58(0.29)\\ 
    100 &  -28176.80(47.16) & -28556.96(0.36)&-29416.42(0.44) & -28650.72(0.32)\\
    \bottomrule
    \end{tabular}
}
\begin{flushleft}
The results of MrBayes SS and VBPI-SBN are reported by \citet{Zhang22VBPI}, and those of VBPI-ARTree and VBPI-ARTreeFormer are produced by us.
\end{flushleft}
    \label{tab:influenza-mll}
\end{table}

\section{Discussion}

\paragraph{Comparison with prior works}
The most common approach for Bayesian phylogenetic inference is Markov chain Monte Carlo (MCMC), which relies on random walks to explore the tree space, e.g., MrBayes \citep{ronquist2012mrbayes}, BEAST \citep{drummond2007beast}. 
MCMC methods have long been considered the standard practice of systematic biology research and are used to construct the ground truth phylogenetic trees in our experiments.
However, as the tree space contains both the continuous and discrete components (i.e., the branch lengths and tree topologies), the posterior distributions of phylogenetic trees are often complex multimodal distributions.
Furthermore, the involved tree proposals are often limited to local modifications that can lead to low exploration efficiency, which makes 
MCMC methods require extremely long runs to deliver accurate posterior estimates \citep{Whidden2014QuantifyingME,Zhang22VBPI}.

ARTreeFormer is established in the line of variational inference (VI) \citep{jordan1999introduction, Blei2016VariationalIA}, another powerful tool for Bayesian inference.
VI selects the closest member to the posterior distribution from a family of candidate variational distributions by minimizing some statistical distance, usually the KL divergence.
Compared to MCMC, VI tends to be faster and easier to scale up to large data by transforming a sampling problem into an optimization problem.
The success of VI often relies on the design of expressive variational families and efficient optimization procedures.
Besides the variational Bayesian phylogenetic inference (VBPI) introduced before, there exist other VI methods for Bayesian phylogenetic inference.
VaiPhy \citep{koptagel2022vaiphy} approximates the posterior of multifurcating trees with a novel sequential tree topology sampler based on maximum spanning trees.
GeoPhy \citep{mimori2023geophy} models the tree topology distribution through a mapping from continuous distributions over the leaf nodes to tree topologies via the Neighbor-Joining (NJ) algorithm \citep{NJ}.
PhyloGen \citep{duan2024phylogen} uses pre-trained DNA-based node features for computing the pairwise distance matrix, which will then be mapped to a binary tree topology with the NJ algorithm.

As a classical tool in Bayesian statistics, sequential Monte Carlo (SMC) \citep{BouchardCt2012PhylogeneticIV} and its variant combinatorial SMC (CSMC) \citep{Wang2015BayesianPI} propose to sample tree topologies through subtree merging and resampling steps for Bayesian phylogenetic inference. 
VCSMC \citep{Moretti2021VariationalCS} employs a learnable proposal distribution based on CSMC and optimizes it within a variational framework.
$\phi$-CSMC \citep{koptagel2022vaiphy} makes use of the parameters of VaiPhy to design the proposal distribution for sampling bifurcating trees.
This approach is further developed by H-VCSMC \citep{chen2025hvcsmc} which transfers the merging and resampling steps of VCSMC to the hyperbolic space.
The subtree merging operation in SMC based methods is also the core idea of PhyloGFN \citep{Zhou2023PhyloGFN}, which instead treats the merging choices as actions within the GFlowNet \citep{bengio2021gflownet} framework and optimizes the trajectory balance objective \citep{malkin2022trajectorybalance}.

\paragraph{Potential impact of ARTreeFormer}
Building upon the tree topology construction algorithm of ARTree, ARTreeFormer introduces a more computationally efficient and expressive distribution family for variational Bayesian phylogenetic inference (VBPI). The efficiency gains primarily stem from the use of a fixed-point algorithm in the node embedding module. 
While fixed-point algorithms can often raise concerns regarding the cost of matrix multiplications and potentially long convergence times—especially when poorly tuned—ARTreeFormer addresses these challenges effectively in several ways.
First, matrix multiplications are implemented as tensor operations, which are efficiently accelerated on CUDA-enabled devices (see our open-source implementation). Second, the number of iterations required for convergence is significantly reduced through the use of the power trick, achieving logarithmic scaling. 
Thirdly, we provide a theoretical guarantee (Corollary \ref{thm:convergence}) that the convergence rate of the fixed-point algorithm is constant, independent of the number of taxa or the shape of the tree topology.

Topological node embeddings (i.e., learnable topological features) \citep{Zhang2023VBPIGNN} provide a general-purpose representation framework for phylogenetic trees and have been employed in various downstream tasks.
For example, VBPI-SIBranch \citep{xie2024vbpisibranch} uses these embeddings to parametrize semi-implicit branch length distributions, while PhyloVAE \citep{xie2025phylovae} leverages them to obtain low-dimensional representations of tree topologies for tree clustering and diagnostic analysis in phylogenetics. The fixed-point algorithm introduced in this work offers an improved and efficient approach to computing these embeddings, and can be seamlessly integrated into such downstream applications, demonstrating broad potential for impact across phylogenetic modeling tasks.

Another key contribution of ARTreeFormer is the integration of the attention mechanism \citep{vaswani2017attention} into phylogenetic inference.
Since its introduction, attention has become a foundational component in modern deep learning, powering numerous milestone models such as GPT-4o \citep{openai2024gpt4o} and DeepSeek-V2 \citep{deepseek2024}.
Despite its widespread success, its potential for modeling phylogenetic tree structures remains underexplored. In this work, we demonstrate that incorporating attention into the message passing module of ARTreeFormer enables comparable or superior performance relative to traditional graph neural networks (GNNs), highlighting its effectiveness in capturing long-range dependencies in tree-structured data.

Phylogenetic inference provides critical insights for making informed public health decisions, particularly during pandemics. Developing efficient Bayesian phylogenetic inference algorithms that can deliver accurate posterior estimates in a timely manner is therefore of immense value, with the potential to save countless lives.
VI approaches hold significant promise due to their optimization-based framework.
For example, VI methods have been used for rapid analysis of pandemic-scale data (e.g.,  SARS-CoV-2 genomes) to provide accurate estimates of epidemiologically relevant quantities that can be corroborated via alternative public health data sources \citep{Ki2022-nt}.
We expect more efficient VI approaches for Bayesian phylogenetics and associated software to be developed in the near future, further advancing this critical field.

\paragraph{Future directions}
There are several future practical directions for advancing ARTreeFormer, which we discuss as follows.
Firstly, the embedding method for phylogenetic trees in ARTreeFormer can be further explored. 
For example, PhyloGen \citep{duan2024phylogen} use pre-trained DNA-based node features, and GeoPhy \citep{mimori2023geophy} and H-VCSMC \citep{chen2025hvcsmc} consider embedding trees in hyperbolic space.
As the input to the model, the representation power and generalization ability of the embedding method might have a marked impact on the performance of ARTreeFormer.
Secondly, the attention mechanism for the message passing on phylogenetic trees can be more delicately designed.
For example, the attention masks can be modified according to the neighborhood structures.
\citet{muller2024attending} provides a comprehensive survey on the design details of graph transformers.
Thirdly, the fast computation and scalability of ARTreeFormer offer the possibility of large phylogenetic inference models capable of zero-shot inference on biological sequences.
This may require more expressive model designs, especially powerful node embedding schemes, and more high-quality data.
We hope these discussions could help inspire more advances in variational approaches for phylogenetic inference.

\section{Conclusion}
In this work, we presented ARTreeFormer, a variant of ARTree that leverages the scalable fixed-point iteration algorithm and the attention mechanism to accelerate the autoregressive modeling of tree topologies in phylogenetic inference.
In contrast to ARTree, which involves the Dirichlet energy minimization via expensive and non-vectorizable tree traversals to compute the node embeddings, ARTreeFormer introduce a specially designed fixed-point algorithm that facilitates highly vectorizable computation.
We also introduce an attention-based global message passing module, which is capable of capturing the crucial global information in only one forward pass, to replace the GNN-based local message passing module.
Experiments on various phylogenetic inference problems showed that ARTreeFormer is significantly faster than ARTree in training and evaluation while performing comparably or better in terms of approximation accuracy.

\section*{Acknowledgements}
Cheng Zhang was partially supported by National Natural Science Foundation of China (grant no. 12201014, grant no. 12292980 and grant no. 12292983), as well as National Institutes of Health grant AI162611.
The research of Cheng Zhang was support in part by National Engineering Laboratory for Big Data Analysis and Applications, the Key Laboratory of Mathematics and Its Applications (LMAM) and the Key Laboratory of Mathematical Economics and Quantitative Finance (LMEQF) of Peking University.
The authors appreciate Zichao Yan, Ming Yang Zhou, and Dinghuai Zhang for their constructive discussion on this project. 
The authors are grateful for the computational resources provided by the High-performance Computing Platform of Peking University.

\paragraph{Data availability statement}
The sequence data for the DS1-8 can be found at \\ \href{https://github.com/tyuxie/ARTreeFormer}{\texttt{https://github.com/tyuxie/ARTreeFormer}}.
For reproducing the TDE task, the short run data for DS1-4 are provided in \href{https://github.com/tyuxie/ARTreeFormer}{\texttt{https://github.com/tyuxie/ARTreeFormer}}, and those for DS5-8 are provided at \href{https://drive.google.com/drive/folders/1qMdv_NxpsLZlu510izs26V6b02smGAoH}{\texttt{https://drive.google.com/drive/folders/1qMdv\_NxpsLZlu510izs26V6b02smGAoH}}.

\paragraph{Code availability statement}
The codebase for reproducing the results of ARTreeFormer is provided at \href{https://github.com/tyuxie/ARTreeFormer}{\texttt{https://github.com/tyuxie/ARTreeFormer}}.

\bibliographystyle{nameyear}
\bibliography{reference}

\newpage
\appendix

\section{Details of ARTree}\label{app:artree}

\subsection{Tree topology generating process}
Let $\tau_n=(V_n,E_n)$ be a tree topology with $n$ leaf nodes and $V_n, E_n$ are the sets of nodes and edges respectively. 
Here we only discuss the modeling of unrooted tree topologies.
A pre-selected order (also called the taxa order) for the leaf nodes $\mathcal{X}=\{x_1,\ldots,x_N\}$ is assumed. We first give the definition of ordinal tree topologies.

\begin{definition}[Ordinal Tree Topology; Definition 1 in \citet{xie2023artree}]
Let $\mathcal{X}=\{x_1,\ldots,x_N\}$ be a set of $N(N\geq 3)$ leaf nodes. 
Let $\tau_n=(V_n,E_n)$ be a tree topology with $n (n\leq N)$ leaf nodes in $\mathcal{X}$. 
We say $\tau_n$ is an ordinal tree topology of rank $n$, if its leaf nodes are the first $n$ elements of $\mathcal{X}$, i.e., $V_n\cap \mathcal{X}= \{x_1,\ldots,x_n\}$. 
\end{definition}

The tree topology generating process is initialized by $\tau_3$, the unique ordinal tree topology of rank 3.
In the $n$-th step ($n$ start from 3), assume we have an ordinal tree topology $\tau_n = (V_n, E_n)$ of rank $n$. To incorporate the leaf node $x_{n+1}$ into $\tau_n$, the following steps are taken:
\begin{enumerate}\itemsep0em 
    \item A choice is made for an edge $e_{n} = (u,v) \in E_n$, which is then removed from $E_n$.
    \item Add a new node $w$ and two additional edges, $(u,w)$ and $(w,v)$ to the tree topology $\tau_n$.
    \item Add the next leaf node $x_{n+1}$ and an additional edge $(w,x_{n+1})$ to the tree topology $\tau_n$.
\end{enumerate}
The above steps create an ordinal tree topology $\tau_{n+1}$ of rank $n+1$.
Repeating these steps for $n=3,\ldots,N-1$ leads to the eventual formation of the ordinal tree topology $\tau = \tau_N$ of rank $N$.
The selected edges at each time step form a sequence $D = (e_3, \ldots, e_{N-1})$, which we call $D$ a decision sequence. Here we give two main theoretical results.
\begin{theorem}\label{thm-bijection}
The generating process $g(\cdot):D\mapsto\tau$ is a bijection between the set of decision sequences of length $N-3$ and the set of ordinal tree topologies of rank $N$.
\end{theorem}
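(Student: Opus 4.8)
\textbf{Proof proposal for Theorem~\ref{thm-bijection}.}

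The plan is to establish the bijection by exhibiting an explicit two-sided inverse of the generating map $g$. First I would verify that $g$ is well-defined and lands in the right set: by an easy induction on $n$, each leaf-node-addition step takes an ordinal tree topology of rank $n$ to one of rank $n+1$ (the new leaf is $x_{n+1}$, matching the required leaf set $\{x_1,\dots,x_{n+1}\}$, and the bifurcating property is preserved because we split one edge and attach a degree-three node). Hence $g$ maps length-$(N-3)$ decision sequences into ordinal tree topologies of rank $N$. A counting sanity check also helps here: at step $n$ there are $|E_n| = 2n-3$ edges to choose from, so the number of decision sequences of length $N-3$ is $\prod_{n=3}^{N-1}(2n-3) = (2N-5)!!$, which equals the number of unrooted bifurcating tree topologies on $N$ labeled leaves — so a bijection is at least numerically plausible.

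The core of the argument is to construct the inverse map $g^{-1}$ by \emph{leaf deletion in reverse taxon order}. Given an ordinal tree topology $\tau_N$ of rank $N$, the leaf $x_N$ is attached to a unique internal node $w$ of degree three; removing $x_N$ and its pendant edge leaves $w$ with degree two, and suppressing $w$ (merging its two remaining incident edges $(u,w)$ and $(w,v)$ into a single edge $e_{N-1}=(u,v)$) yields a tree topology $\tau_{N-1}$. I would check that $\tau_{N-1}$ is again ordinal of rank $N-1$: its leaf set is $\{x_1,\dots,x_{N-1}\}$ and it is still bifurcating. Recording $e_{N-1}$ and iterating down to $\tau_3$ produces a decision sequence $D=(e_3,\dots,e_{N-1})$; define $g^{-1}(\tau_N)=D$. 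The key point is that each reverse step is the exact undoing of one forward step: attaching $x_{n+1}$ to edge $e_n=(u,v)$ and then deleting $x_{n+1}$ and suppressing the new node recovers precisely the edge $(u,v)$ in $\tau_n$, and conversely deleting $x_{n+1}$ from $\tau_{n+1}$ identifies a unique edge whose subdivision-and-attachment reconstructs $\tau_{n+1}$. Both composites $g^{-1}\circ g$ and $g\circ g^{-1}$ are then the identity, by induction on $N$.

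The main obstacle — and the step that deserves the most care — is the well-definedness and uniqueness built into the reverse map: one must argue that in any ordinal tree topology of rank $n+1$ with $n+1\ge 4$, the highest-indexed leaf $x_{n+1}$ is adjacent to exactly one internal node, that this internal node indeed has degree three so that suppression is unambiguous, and that the resulting edge $e_n$ is well-defined (independent of any choices). For $n+1=4$ there is a small base case to check by hand, since $\tau_3$ is the unique rank-3 topology and the single internal node plays a special role. Once this local "peel off the last leaf" operation is shown to be the canonical inverse of "attach the next leaf to a chosen edge," the bijection follows formally. I would also note the implicit bookkeeping that the edge set of $\tau_n$ has a canonical description (so that "choosing $e_n \in E_n$" has the same meaning on both sides of the correspondence), which is where it matters that the taxa order is fixed in advance.
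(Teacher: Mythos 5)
Your proposal is correct and follows essentially the same route as the source of this result (the theorem is quoted here from the ARTree paper, whose argument, reflected in this paper's reference to ``the decomposition process induced by $g^{-1}(\cdot)$,'' likewise constructs the inverse by deleting the highest-indexed leaf and suppressing the resulting degree-two node in reverse taxon order). Your additional well-definedness checks and the $(2N-5)!!$ counting sanity check are consistent with that argument and add no gaps.
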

\begin{theorem}
The time complexity of the decomposition process induced by $g^{-1}(\cdot)$ is $O(N)$.
\end{theorem}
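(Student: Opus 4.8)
The plan is to realize $g^{-1}$ by an explicit algorithm that peels off leaf nodes in decreasing order of rank and to show that each of its $N-3$ stages runs in $O(1)$ time under an appropriate data structure. Concretely, starting from $\tau=\tau_N$, for $n=N-1,N-2,\ldots,3$ the algorithm reverses the $n$-th step of the generating process: it locates the leaf node $x_{n+1}$, finds its unique neighbor $w$ (necessarily an internal node), reads off the other two neighbors $u,v$ of $w$, records the decision $e_n=(u,v)$, and then deletes $x_{n+1}$ and $w$ and inserts the edge $(u,v)$, thereby producing $\tau_n$. Iterating down to $n=3$ outputs the whole decision sequence $D=(e_3,\ldots,e_{N-1})$.

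For correctness I would invoke Theorem \ref{thm-bijection}: since $g$ is a bijection, each leaf-addition step is invertible, and the inverse of the $n$-th step is precisely ``remove the rank-$(n+1)$ leaf and suppress the resulting degree-two vertex.'' One subtlety to record is that, because $\tau_{n+1}$ is an unrooted bifurcating tree, the node $w$ adjacent to $x_{n+1}$ has degree exactly $3$, so the two ``other neighbors'' $u,v$ are well defined; moreover later leaf additions subdivide edges and introduce fresh internal nodes rather than attaching new edges to existing internal nodes, so they cannot have altered $w$'s neighborhood in a way that breaks this. Hence the stage-$n$ reversal genuinely recovers $\tau_n$ together with $e_n$, and the composition of all stages inverts $g$.

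For the complexity I would maintain (i) the tree as an adjacency-list structure with cross-pointers linking the two list entries representing each edge, and (ii) an array mapping each taxon label to its node; both are built in $O(N)$ time from the input, since $\tau_N$ has $2N-2$ nodes and $2N-3$ edges. Then at each stage: locating $x_{n+1}$ is an $O(1)$ array lookup; finding $w$ is $O(1)$ since $\deg(x_{n+1})=1$; finding $u,v$ is $O(1)$ since $\deg(w)=3$; and the local surgery (deleting two nodes and three incident edges, inserting one edge, and patching the adjacency lists of $u$ and $v$) touches only $O(1)$ list entries, the cross-pointers making each edge deletion $O(1)$. Thus each of the $N-3$ stages costs $O(1)$, and adding the $O(N)$ preprocessing gives a total running time of $O(N)$.

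The argument is essentially bookkeeping; the only point requiring genuine care is the data-structure design that makes the per-stage surgery truly $O(1)$ (in particular $O(1)$ edge deletion from adjacency lists), and I do not anticipate any substantive mathematical obstacle beyond that.
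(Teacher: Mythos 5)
Your proposal is correct and follows the intended argument: the paper states this theorem without proof (it is carried over from the original ARTree paper), and the decomposition $g^{-1}$ is exactly your reverse leaf-peeling, with each of the $N-3$ stages an $O(1)$ local surgery (locate $x_{n+1}$, read off its degree-$3$ neighbor $w$, record $e_n=(u,v)$, delete and re-join) under a constant-degree adjacency structure, plus $O(N)$ preprocessing. One phrasing caveat: later leaf additions \emph{can} alter the neighborhood of $w$ in $\tau_N$ (they may subdivide $(u,w)$, $(w,v)$, or even the pendant edge $(w,x_{n+1})$); the stage-$n$ reversal is nonetheless valid because those later additions have already been undone when stage $n$ is reached, so by induction the current tree is exactly $\tau_{n+1}$, in which $w$'s neighbors are precisely $x_{n+1},u,v$.
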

The bijectiveness in Theorem \ref{thm-bijection} implies that we can model the distribution $Q(\tau)$ over tree topologies by modelling $Q(D)$ over decision sequences, i.e.,
\begin{equation}\label{eq-prob-decomp}
Q(\tau) = Q(D) = \prod_{n=3}^{N-1}Q(e_n|e_{<n}),
\end{equation}
where $e_{<n}=(e_3,\ldots,e_{n-1})$ and $e_{<3}=\emptyset$. 
The conditional distribution $Q(e_n|e_{<n})$, which describes the distribution of edge decision given all the decisions made previously, is called the edge decision distribution by us.

\subsection{Graph neural networks for edge decision distribution}
The edge decision distribution $Q(e_n|e_{<n})$ defines the probability of adding the leaf node $x_{n+1}$ to the edge $e_n$ of $\tau_n$, conditioned on all the ordinal tree topologies $(\tau_3,\ldots,\tau_{n})$ generated so far. 
To model $Q(e_n|e_{<n})$, ARTree employs the following four modules.
\paragraph{Node embedding module}
At the $n$-th step of the generation process, ARTree relies on the node embedding module to assign node embeddings for the nodes of the current tree topology $\tau_n = (V_n, E_n)$. 
The embedding method follows \citet{Zhang2023VBPIGNN}, which first assigns one-hot encoding for the leaf nodes:
\[
\left[f_n(x_i)\right]_j = \delta_{ij}, \quad 1 \leq i \leq n, \quad 1 \leq j \leq N,
\]
where $\delta$ denotes the Kronecker delta function. We then obtain embeddings for the interior nodes by minimizing the Dirichlet energy, defined as 
\[
\ell(f_n, \tau_n) := \sum_{(u,v)\in E_n}||f_n(u)-f_n(v)||^2.
\]
This minimization process is achieved through the two-pass algorithm (Algorithm \ref{alg:embedding}).
Note that this process contains $(2n-6)$ sub-iterations and each sub-iteration contains a linear combination over at most 3 vectors in $\mathbb{R}^N$.
The time complexity of calculating the topological node embeddings is $O(Nn)$.
Finally, a linear transformation is applied to all the node embeddings to obtain the initial node features in $\mathbb{R}^d$ for message passing. 
It should be highlighted that the embeddings for interior nodes may vary as the number of leaf nodes $n$, leading to the need for time guidance in the readout module.

\begin{algorithm}[h]
\caption{ARTree: an autoregressive model for phylogenetic tree topologies \citep{xie2023artree}}
\label{alg:generation}
\KwIn{A set $\mathcal{X}=\{x_1,\ldots,x_N\}$ of leaf nodes.}
\KwOut{An ordinal tree topology $\tau$ of rank $N$; the ARTree probability $Q(\tau)$ of $\tau$.}
$\tau_3=(V_3,E_3) \leftarrow$ the unique ordinal tree topology of rank $3$\;
   \For{$n=3,\ldots,N-1$}{
Let $f_n(u) = c_u f_n(\pi_u)+d_u$ where $\pi_u$ is the parent of $u$\;
Calculate the probability vector $q_n\in \mathbb{R}^{|E_n|}$ using the current GNN model\;
Sample an edge decision $e_n$ from $\mathrm{\sc Discrete}\left(q_n\right)$ and assume $e_{n}=(u,v)$\;
Create a new node $w$\;
$E_{n+1} \leftarrow \left(E_n\backslash \{e_{n}\}\right)\cup \{(u,w), (w,v), (w,x_{n+1})\}$\;
$V_{n+1} \leftarrow V_n\cup \{w,x_{n+1}\}$\;
$\tau_{n+1}\leftarrow (V_{n+1}, E_{n+1})$\;
}
 $\tau\leftarrow \tau_N$\;
 $Q(\tau)\leftarrow q_3(e_3)q_4(e_4)\cdots q_{N-1}(e_{N-1}).$
\end{algorithm}

\paragraph{Message passing module}
ARTree employs iterative message passing rounds to calculate the node features, capturing the topological information of $\tau_n$. 
The $l$-th message passing round is implemented by 
\begin{align*}
m^l_n(u,v) &=F_{\textrm{message}}^l(f^l_n(u), f^l_n(v)),\\
f^{l+1}_n(v) &= F_{\textrm{updating}}^l\left(\{m^l_n(u,v);u\in \mathcal{N}(v)\}\right),
\end{align*}
where $F_{\textrm{message}}^l$ and $F_{\textrm{updating}}^l$ are the message function and updating function in the $l$-th round, and $\mathcal{N}(v)$ is the neighborhood of the node $v$.
The corresponding time-complexity is $O(nd^2)$ (noting that MLPs are applied to all the nodes)
In particular, ARTree sets the number of message passing steps $L=2$ and utilizes the edge convolution operator~\citep{Wang2018DynamicGC} for the design of $F_{\textrm{message}}^l$ and $F_{\textrm{updating}}^l$.

\paragraph{Recurrent module}
To efficiently incorporate the information of previously generated tree topologies into the edge decision distribution, ARTree uses a gated recurrent unit (GRU) \citep{cho2014learning} to form the hidden states of each node.
Concretely, the recurrent module is implemented by
\[
h_n(v) = \mathrm{\sc GRU}(h_{n-1}(v), f_n^L(v)),
\]
where $h_n(v)$ is the hidden state of $v$ at the $n$-th step in the generating process.
For the newly added nodes, their hidden states are initialized to zeros.
This module is mainly composed of MLPs on the node/edge features, whose time complexity is $O(nd^2)$. 

\paragraph{Readout module}
In the readout module, to form the edge decision distribution $Q(e_n|e_{<n})$, ARTree calculates the scalar edge feature $r_n(e)\in\mathbb{R}$ of $e=(u,v)$ using
\begin{align*}
p_n(e) &= F_{\textrm{pooling}}\left(h_n(u)+b_n, h_n(v)+b_n\right), \\
r_{n}(e) &= F_{\textrm{readout}}\left(p_n(e)+b_n\right),
\end{align*}
where $b_n$ is the sinusoidal positional embedding of time step $n$ that is widely used in Transformers \citep{vaswani2017attention},
$F_{\textrm{pooling}}$ is the pooling function implemented as 2-layer MLPs followed by an elementwise maximum operator, and $F_{\textrm{readout}}$ is the readout function implemented as 2-layer MLPs with a scalar output. 
This module is mainly composed of MLPs on the node/edge features, whose time complexity is $O(nd^2)$. 
The edge decision distribution is
\[
Q(\cdot|e_{<n})\sim \mathrm{\sc Discrete}\left(q_n\right),\quad q_n  = \mathrm{softmax}\left(\{r_n(e)\}_{e\in E_n}\right),
\]
where $q_n\in \mathbb{R}^{|E_n|}$ is a probability vector.

Let $\bm{\phi}$ be all the learnable parameters in GNNs. Then the ARTree based probability of a tree topology $\tau$ takes the form
\[
Q_{\bm{\phi}}(\tau) = Q_{\bm{\phi}}(D) = \prod_{n=3}^{N-1}Q_{\bm{\phi}}(e_n|e_{<n}),
\]
The whole process of ARTree for generating a tree topology is summarized in Algorithm \ref{alg:generation}.

\begin{algorithm}[h]
\caption{Two-pass algorithm for topological embeddings for internal nodes \citep{Zhang2023VBPIGNN}}
\label{alg:embedding}
\KwIn{Tree topology $\tau_n=(V_n,E_n)$ of rank $n$, where $V_n=V_n^b\cup V_n^o$; Topological embeddings for the leaf nodes $\{f_n(u)|u\in V_n^b\}$.}
\KwOut{Topological embeddings for the leaf nodes $\{f_n(u)|u\in V_n^o\}$}
Initialized $c_u=0, d_u=f_n(u)|u\in V_n^b$\;
   \For{$u$ in the postorder traverse of $\tau_n$}{
   \uIf{$u$ is not the root node}{
   Compute 
   \[
   c_u = \frac{1}{|\mathcal{N}(u)|-\sum_{v\in \mathrm{ch}(u)}c_v},\quad d_u = \frac{\sum_{v\in \mathrm{ch}(u)}d_v}{|\mathcal{N}(u)|-\sum_{v\in \mathrm{ch}(u)}c_v}
   \]
   where $\mathcal{N}(u)$ is the neighborhood of $u$ and $\mathrm{ch}(u)$ is the set of the children of $u$.
}
}
   \For{$u$ in the preorder traverse of $\tau_n$}{
   \uIf{$u$ is not the root node}{
    Let $f_n(u) = c_u f_n(\pi_u)+d_u$ where $\pi_u$ is the parent of $u$.
    }
   \Else{Let $f_n(u) = \frac{\sum_{v\in \mathrm{ch}(u)}d_v}{|\mathcal{N}(u)|-\sum_{v\in \mathrm{ch}(u)}c_v}$.
   }
   }
\end{algorithm}

\section{Details of variational Bayesian phylogenetic inference}\label{app:vbpi}
By positing a tree topology variational distribution $Q_{\bm{\phi}}(\tau)$ and a branch length variational distribution $Q_{\bm{\psi}}(\bm{q}|\tau)$ which is conditioned on tree topologies, the variational Bayesian phylogenetic inference (VBPI) \citep{Zhang2019VBPI} approximates the phylogenetic posterior $p(\tau, \bm{q}|\bm{Y})$ in Eq~(\ref{eq:posterior}) with  $Q_{\bm{\phi}, \bm{\psi}}(\tau,\bm{q})=Q_{\bm{\phi}}(\tau)Q_{\bm{\psi}}(\bm{q}|\tau)$.
To find the best approximation, VBPI maximizes the following multi-sample lower bound
\[
L^{K}(\bm{\phi},\bm{\psi}) = \mathbb{E}_{Q_{\bm{\phi},\bm{\psi}}(\tau^{1:K},\bm{q}^{1:K})}\log \left(\frac{1}{K}\sum_{i=1}^K\frac{p(\bm{Y}|\tau^i,\bm{q}^i) p(\tau^i, \bm{q}^i)}{Q_{\bm{\phi}}(\tau^i)Q_{\bm{\psi}}(\bm{q}^i|\tau^i)}\right).
\]
where $Q_{\bm{\phi},\bm{\psi}}(\tau^{1:K},\bm{q}^{1:K})=\prod_{i=1}^K Q_{\bm{\phi},\bm{\psi}}(\tau^{i},\bm{q}^{i})$.
Compared to the single-sample lower bound, the multi-sample lower bound enables efficient variance-reduced gradient estimators and encourages exploration over the vast and multimodal tree space.
However, as a large $K$ may also reduce the signal-to-noise ratio and deteriorate the training of variational parameters \citep{Rainforth19}, a moderate $K$ is suggested \citep{Zhang22VBPI}.
In practice, the gradients of the multi-sample lower bound w.r.t the tree topology parameters $\bm{\phi}$ and the branch length parameter $\bm{\psi}$ can be estimated by the VIMCO/RWS estimator \citep{Mnih2016vimco,RWS} and the reparameterization trick \citep{VAE} respectively.
Specifically, the gradient $\nabla_{\bm{\phi}} L^K(\bm{\phi},\bm{\psi})$
 can be expressed as 
 \begin{align*}
 \nabla_{\bm{\phi}} L^K(\bm{\phi},\bm{\psi}) &= R_1+R_2,\\
R_1 &= \mathbb{E}_{Q_{\bm{\phi},\bm{\psi}}(\tau^{1:K},\bm{q}^{1:K})}\nabla_{\bm{\phi}}\log \left(\frac{1}{K}\sum_{i=1}^K\frac{p(\bm{Y}|\tau^i,\bm{q}^i) p(\tau^i, \bm{q}^i)}{Q_{\bm{\phi}}(\tau^i)Q_{\bm{\psi}}(\bm{q}^i|\tau^i)}\right) \\ 
R_2 &= \mathbb{E}_{Q_{\bm{\phi},\bm{\psi}}(\tau^{1:K},\bm{q}^{1:K})}\sum_{i=1}^K\log \left(\frac{1}{K}\sum_{i=1}^K\frac{p(\bm{Y}|\tau^i,\bm{q}^i) p(\tau^i, \bm{q}^i)}{Q_{\bm{\phi}}(\tau^i)Q_{\bm{\psi}}(\bm{q}^i|\tau^i)}\right)\nabla_{\bm{\phi}}Q_{\bm{\phi},\bm{\psi}}(\tau^{i},\bm{q}^{i}).
 \end{align*}
VIMCO considers the following expression of $R_2$,
\[
R_2 = \mathbb{E}_{Q_{\bm{\phi},\bm{\psi}}(\tau^{1:K},\bm{q}^{1:K})}\sum_{i=1}^K\left\{\log \left(\frac{1}{K}\sum_{i=1}^K\frac{p(\bm{Y}|\tau^i,\bm{q}^i) p(\tau^i, \bm{q}^i)}{Q_{\bm{\phi}}(\tau^i)Q_{\bm{\psi}}(\bm{q}^i|\tau^i)}\right)-\hat{f}_i\right\}\nabla_{\bm{\phi}}Q_{\bm{\phi},\bm{\psi}}(\tau^{i},\bm{q}^{i})
\]
where $\hat{f}_i=\log\left(\frac{1}{K-1}\sum_{j\neq i}\frac{p(\bm{Y}|\tau^j,\bm{q}^j) p(\tau^j, \bm{q}^j)}{Q_{\bm{\phi}}(\tau^j)Q_{\bm{\psi}}(\bm{q}^j|\tau^j)}\right)$ is a control variate.

The tree topology model $Q_{\bm{\phi}}(\tau)$ can be parametrized by ARTree, which enjoys unconfined support over the tree topology space.
In addition to ARTree, subsplit Bayesian networks (SBNs) have long been the common choice for $Q_{\bm{\phi}}(\tau)$. 
In SBNs, a subset $C$ of the leaf nodes is called a clade, and an ordered pair of two clades $(C_1, C_2)$ is called a subsplit of $C$ if $C_1\cup C_2=C$.
For each internal node on a tree topology $\tau$, it corresponds to a subsplit $s$ determined by the descendant leaf nodes of its children.
The SBNs are then parametrized by the probabilities of the root subsplit $\{p_{s_1};s_1\in\mathbb{S}_{\mathrm{r}}\}$ and the probabilities of the child-parent subsplit pairs $\{p_{s|t}; \ s|t\in \mathbb{S}_{\mathrm{ch|pa}}\}$.
For an unrooted tree topology $\tau=(V,E)$, its SBN based probability is 
\[
Q_{\mathrm{sbn}}(\tau) = p_{s_r}\prod_{u\in V^o;u\neq r}p_{s_u|s_{\pi_u}},
\]
where $V^o$ is the set of internal nodes, $r$ is the root node, $\pi_u$ are the parents of $u$, and $s_{u}$ is the subsplit assignment of the node $u$.
As the size of $\mathbb{S}_{\mathrm{r}}$ and $\mathbb{S}_{\mathrm{ch|pa}}$ explodes combinatorially as the number of taxa increases, SBNs rely on subsplit support estimation for a tractable parameterization.
\textbf{The subsplit support estimation can be difficult when the phylogenetic posterior is diffuse, and makes the support of SBNs cannot span the entire tree topology space.}
We refer the readers to \citet{Zhang2018SBN} and \citet{Zhang2019VBPI} for a detailed introduction to SBNs as well as their application to VBPI.

The branch length model $Q_{\bm{\psi}}(\bm{q}|\tau)$ is often taken to be a diagonal lognormal distribution, which can be parametrized using the learnable topological features \citep{Zhang2023VBPIGNN} of $\tau$ as follows.
This approach first assigns the topological node embeddings $\{f_u\}_{u\in V}$ to the nodes on $\tau$ (Algorithm \ref{alg:embedding}) and then forms the node features $\{h_u\}_{u\in V}$ using message passing networks over $\tau$.
Usually, these message passing networks take the edge convolutional operator \citep{Wang2018DynamicGC}.
For each edge $e=(u,v)$ in $\tau$, one can obtain the edge features using $h_e=p(h_u,h_v)$ where $p$ is a permutation invariant function called the edge pooling. At last, the mean and standard deviation parameters for the diagonal lognormal distribution are given by 
\begin{equation*}
\mu(e,\tau) = \mathrm{MLP}^{\mu}(h_e),\quad \sigma(e,\tau)=\mathrm{MLP}^{\sigma}(h_e)
\end{equation*}
where $\mathrm{MLP}^{\mu}$ and $\mathrm{MLP}^{\sigma}$ are two multi-layer perceptrons (MLPs).
In the VBPI experiment in Section \ref{sec:exp-vbpi}, the collaborative branch length models for all SBN, ARTree, and ARTreeFormer are parametrized in this way.

\section{Additional experimental results}\label{app:results}
\subsection{Additional results on tree topology density estimation}\label{app:tde-results}

Fig \ref{fig:tde} shows the performance of different methods on DS1. 
Both ARTree and ARTreeFormer provide more accurate probability estimates for the tree topologies on the two peaks of the posterior distribution, compared to SBN-EM and SBN-SGA.
We see that ARTreeFormer can provide the same accurate probability estimates as ARTree, which proves the effectiveness of ARTreeFormer.

\begin{figure}[h]
    \centering
    \includegraphics[width=0.95\linewidth]{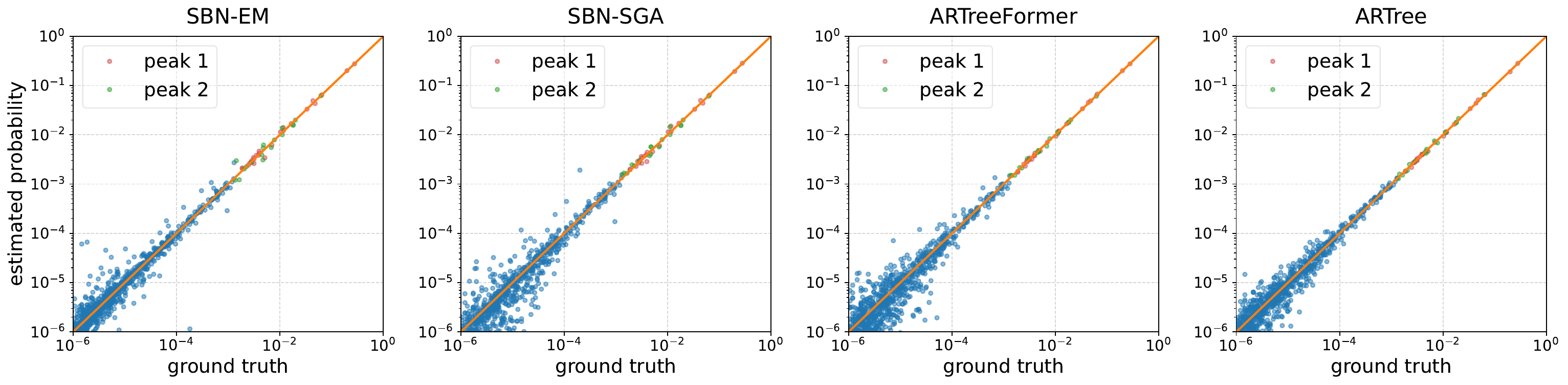}
    \caption{\textbf{Performances of different methods for tree topology density estimation on DS1.}}
    \label{fig:tde}
\end{figure}

For ARTreeFormer, we also conducted an ablation study about the number of heads $h$ and the hidden dimension $d$ in the multi-head attention block (Table \ref{tab:tde-ablation}).
For DS1-4, we train the ARTreeFormer model on the ground truth data set with a batch size of 10 and a learning rate of 0.0001, and evaluate the KL divergence towards the ground truth after 200,000 iterations.
In most cases, the KL divergence gets better as the hidden dimension $d$ increases, while it is not so sensitive to the number of heads.

\begin{table}[h]
\renewcommand\arraystretch{1.2}
\setlength{\tabcolsep}{0.1cm}{}
\caption{\textbf{KL divergences ($\downarrow$) to the ground truth obtained by ARTreeFormer with different hyper-parameters on TDE.}
}
\label{tab:tde-ablation}
\centering
\vskip0.3em
\begin{tabular}{lcccc}
\toprule
\multicolumn{1}{c}{Hyper-parameters} & $h=2, d=100$&$h=4, d=100$&$h=4, d=200$&$h=8, d=200$ \\
\midrule
DS1&0.0058&0.0060&0.0039&0.0039\\
DS2&0.0002&0.0002&0.0003&0.0002\\
DS3&0.0058&0.0052&0.0055&0.0054\\
DS4&0.0097&0.0101&0.0069&0.0071\\
\bottomrule
\end{tabular}
% }
\end{table}

\subsection{Additional results on variational Bayesian phylogenetic inference}\label{app:vbpi-results}

To fully demonstrate the computational burden of ARTreeFormer compared to ARTree, we report the parameter size and memory usage of ARTreeFormer and ARTree for VBPI in Table \ref{tab:vbpi-parameter-memory}.
We see that ARTreeFormer has less memory consumption compared to ARTree, because ARTreeFormer does not need to update all the node features on the tree topology, in analogy with the shorter sequence length in language modeling.

\begin{table}[h]
\renewcommand\arraystretch{1.2}
\setlength{\tabcolsep}{0.1cm}{}
\caption{\textbf{The parameter size and memory usage of ARTreeFormer and ARTree for VBPI.}
}
\label{tab:vbpi-parameter-memory}
\centering
\vskip0.3em
\resizebox{\linewidth}{!}{
\begin{tabular}{lcccccccc}
\toprule
\multicolumn{1}{c}{Data set} & DS1 & DS2 & DS3 & DS4 & DS5 & DS6 & DS7 & DS8 \\
\midrule
ARTree (learnable parameter size)&194K&195K&197K&199K&203K&203K&207K&209K\\
ARTreeFormer (learnable parameter size)&215K&216K&216K&217K&218K&218K&219K&219K\\
\midrule
ARTree (memory)&1143MB&1395MB&1376MB&1680MB&1817MB&1698MB&2070MB&2148MB\\
ARTreeFormer (memory)&556MB&577MB&630MB&690MB&798MB&794MB&896MB&1044MB\\
\bottomrule
\end{tabular}
}
\end{table}

\end{document}